\documentclass[nopreprintline,preprint,12pt]{elsarticle}
\usepackage[T1]{fontenc}
\usepackage{amsmath,amssymb,amsthm,mathtools}
\usepackage{array}
\usepackage{graphicx}
\usepackage[colorlinks=true,citecolor=blue,linkcolor=blue,urlcolor=blue]{hyperref}
\hypersetup{
  pdftitle={Hankel Submatrices with Bounded Word Length: Dimension Tradeoffs in Probabilistic and Quantum Automata},
  pdfauthor={Yixin Zhao},
  pdfsubject={Dimension tradeoffs for Hankel submatrices generated by probabilistic and quantum automata},
  pdfkeywords={bounded word length, weighted automata, Hankel matrices, probabilistic automata, quantum finite automata}
}

\newcommand{\Tr}{\operatorname{Tr}}
\newcommand{\supp}{\operatorname{supp}}
\newcommand{\diag}{\operatorname{diag}}
\newcommand{\Ent}{\operatorname{Ent}}
\newcommand{\cW}{\mathcal W}
\newcommand{\cA}{\mathcal A}
\newcommand{\cC}{\mathcal C}
\newcommand{\cX}{\mathcal X}
\newcommand{\one}{\mathbf 1}

\newtheorem{theorem}{Theorem}[section]
\newtheorem{proposition}[theorem]{Proposition}
\newtheorem{lemma}[theorem]{Lemma}
\newtheorem{corollary}[theorem]{Corollary}
\theoremstyle{definition}
\newtheorem{definition}[theorem]{Definition}
\theoremstyle{remark}
\newtheorem{remark}[theorem]{Remark}

\begin{document}
%
%

\begin{frontmatter}

\title{Hankel Submatrices with Bounded Word Length: Dimension Tradeoffs in Probabilistic and Quantum Automata}

\author{Yixin Zhao\corref{cor1}}
\cortext[cor1]{Corresponding author.}
\address{Beijing Key Laboratory of Fault-Tolerant Quantum Computing,
Beijing Academy of Quantum Information Sciences, Beijing 100193, China;
Beijing National Laboratory for Condensed Matter Physics, Institute of
Physics, Chinese Academy of Sciences, Beijing 100190, China; and University
of Chinese Academy of Sciences, Beijing 100190, China}
\ead{zhaoyixin22@mails.ucas.ac.cn}

\begin{abstract}
We characterize the largest admissible side length of fixed-margin square
Hankel submatrices generated by probabilistic realizations on a \(d\)-point
carrier and quantum linear realizations on a \(d\)-dimensional Hilbert space,
when prefix and suffix word lengths are bounded by \(T\). Using a four-symbol
transition alphabet shared by all selected rows and columns, we show that, for
fixed margins \(\lambda_2>0\) and \(\lambda_1+\lambda_2<1\), the corresponding
side lengths satisfy
\(\ln N_{\mathsf Q,4}=\Theta_{\lambda_1,\lambda_2}(\min\{d^2,T\})\)
for completely positive trace-preserving (CPTP) quantum realizations and
\(\ln N_{\mathsf S,4}=\Theta_{\lambda_1,\lambda_2}(\min\{d,T\})\)
for probabilistic automata. Four unitary and four permutation transition maps
attain the corresponding orders. At zero false-acceptance tolerance
(\(\lambda_2=0\) with \(0\leq\lambda_1<1\)), all four model classes satisfy
\(\ln N=\Theta(\min\{\ln d,T\})\). The upper bounds follow from state-space
packing and word counting, while the lower bounds combine explicit transition
tuples with Grassmannian and positive-rate codebooks whose existence follows
from the probabilistic method. These results establish sharp
dimension--word-length tradeoffs for bounded-word Hankel tables.
\end{abstract}

\begin{keyword}
Identification codes \sep
Hankel matrices \sep
Weighted automata \sep
Probabilistic automata \sep
Quantum finite automata
\end{keyword}

\end{frontmatter}

\section{Introduction}

\subsection{Problem setting and Hankel formulation}

The central object of this paper is a square acceptance table generated by a
shared finite-dimensional transition system. Row \(i\) represents the binary
test for candidate \(i\), while column \(j\) represents the state reached when
message \(j\) is encoded. The table is required to have a fixed-margin
approximate-identity form: for states
\(\sigma_1,\ldots,\sigma_N\) and binary tests
\(F_1,\ldots,F_N\), let \(\langle F_i,\sigma_j\rangle\) denote the test--state
pairing, given by the trace pairing in the quantum model and by the usual
inner product between a test vector and a probability vector in the classical
model. We require
\begin{equation}
 \begin{aligned}
  \langle F_i,\sigma_i\rangle&\ge 1-\lambda_1,\\
  \langle F_i,\sigma_j\rangle&\le \lambda_2
  &&(i\ne j).
 \end{aligned}
 \label{eq:intro-id-table}
\end{equation}
These are the standard diagonal and off-diagonal acceptance constraints in
identification theory
\cite{ahlswede1989identification,winter2004identification}.

Both the states and the tests are generated by words over a single shared
finite alphabet. Specifically, a prefix word \(u_j\) maps a common initial
state to the state associated with message \(j\), while a suffix word \(v_i\)
propagates a common terminal effect backward to the test associated with
candidate \(i\). If \(h\) denotes the resulting acceptance series, the
\((i,j)\)-entry of the response table is \(h(u_jv_i)\). Under the standard
Hankel convention \(H_h(u,v)=h(uv)\), this response table is the transpose of
the selected prefix--suffix submatrix. Thus, the problem is to determine the
largest admissible side length \(N\) of a square Hankel submatrix satisfying
the prescribed margins when every prefix and suffix has length at most \(T\).
The formal realization classes and the corresponding optimization functional
are introduced in Section~\ref{sec:model-results}.

In the classical setting, the state space is the probability simplex on a
\(d\)-point carrier, whereas in the quantum setting it is the set of density
operators on a \(d\)-dimensional Hilbert space. We consider a four-symbol
alphabet whose transition maps are shared by all selected rows and columns.
The two word-length budgets are imposed separately: the encoder and the
tester each receive a budget \(T\), so the concatenated Hankel word
\(u_jv_i\) has length at most \(2T\).

\subsection{Relation to prior work}

Our framework is grounded in weighted-automata and finite-dimensional
realization theory. Probabilistic automata, finite-rank Hankel
representations, and measure-once quantum finite automata provide standard
classical and quantum models for acceptance series
\cite{rabin1963probabilistic,carlyle1971realizations,
moore2000quantum,ambainis1998qfa}. Fooling-set arguments analyze related
identity-like concatenation patterns
\cite{glaister1996lower}. The optimization studied here differs in direction
from Hankel-rank and state-complexity problems: rather than fixing a target
behavior and minimizing the realization dimension, we fix the state dimension
\(d\), word-length bound \(T\), alphabet budget \(q\), and acceptance margins,
and maximize the order of an admissible square Hankel submatrix. The
distinction is summarized in Table~\ref{tab:optimization-comparison}.

\begin{table}[t]
\centering
\small
\setlength{\tabcolsep}{5pt}
\renewcommand{\arraystretch}{1.25}
\caption{Optimization directions in related realization problems.}
\label{tab:optimization-comparison}
\begin{tabular}{>{\raggedright\arraybackslash}p{0.26\textwidth}
                 >{\raggedright\arraybackslash}p{0.31\textwidth}
                 >{\raggedright\arraybackslash}p{0.35\textwidth}}
\hline
Problem & Fixed input & Optimized quantity \\
\hline
Hankel rank / realization
& Acceptance series \(h\)
& Minimum realization dimension \\

State complexity / simulation
& Language, cutpoint behavior, or automaton computation
& Number of states needed to reproduce the behavior \\

Bounded-word Hankel submatrices (this work)
& State dimension \(d\), word-length bound \(T\), acceptance margins, and
alphabet budget \(q\)
& Maximum order of a square Hankel submatrix satisfying the margins \\
\hline
\end{tabular}
\end{table}

Classical and quantum channel identification provide the static code-size
benchmarks for the bounded-word problem considered here. For fixed acceptance
margins, the corresponding logarithmic code-size orders are
\(\Theta(d)\) in the classical setting and \(\Theta(d^2)\) in the quantum
setting
\cite{ahlswede1989identification,winter2004identification}. Our formulation
replaces unconstrained choices of code states and tests by bounded words in a
single shared transition system. The quantum construction attains the
\(\Theta(d^2)\) order in the mixed-state regime, whereas the pure-state
fingerprinting regime has \(\ln N=\Theta(d)\)
\cite{winter2004identification,buhrman2001fingerprinting}.

A related realization problem minimizes the state dimension needed to
reproduce a prescribed finite family of initial-state--word responses under
shared stochastic maps and a single terminal effect
\cite{zhao2026scrc}. In the present work, all message states are generated
from one common initial state, the selected states and tests form a complete
approximate-identity table, and the objective is to maximize the table order
under simultaneous dimension and word-length budgets.

Recent automata-theoretic work establishes tight state-complexity laws for
four-letter simulations and strict-cutpoint behavior
\cite{chenwu2026statecost,chen2026simulation,chen2026behavioral}. These
results fix a target language, cutpoint behavior, or automaton and quantify
the number of states required to reproduce it. They therefore address a
different optimization direction from the one studied here, which maximizes
the largest fixed-margin Hankel submatrix while optimizing the transition
system itself. Complementary work on deterministic identification for
memoryless channels and identification over permutation channels develops
output-space and block-channel viewpoints
\cite{colomer2025dimensional,colomer2025quantumHT,
sarkar2025permutation}. In our model, by contrast, \(T\) counts sequential
transitions acting on a single persistent state space of dimension \(d\).

The quantum lower bound uses Grassmannian packing to construct a large
fixed-margin Hankel submatrix. Projection packings and their volume bounds
are classical
\cite{conway1996grassmannian,barg2002grassmann}, while reflection products
and finite-group orbits provide related geometric constructions
\cite{pitaval2012grassmannianorbits,stewart1980randomorthogonal,
porod1996randomreflections,irving2012banded}. In our construction, four
explicit shared transition maps generate the reachable message states, and
exact inverse words generate the matching backward-propagated tests. Related
overlap and orbit techniques also arise in quantum fingerprinting and
quantum recurrence studies
\cite{buhrman2001fingerprinting,ambainis2009improved,
vasiliev2016hashing,kotowski2026recurrence,gupta2026recurrence}.

\subsection{Contributions and organization}

The primary contributions of this work are as follows:
\begin{itemize}
  \item We formulate identification as a bounded-word Hankel-submatrix
  problem over finite-dimensional realizations with a shared transition
  alphabet.

  \item We establish universal converse bounds from state-space packing and
  word counting.

  \item We give matching four-transition achievability constructions in the
  reversible unitary and permutation subclasses.

  \item We characterize the sharp boundary at zero false-acceptance
  tolerance.
\end{itemize}

From a computational-theoretic perspective, the core contribution is the
formulation and solution of an outer optimization problem for fixed-margin
Hankel submatrices over finite-dimensional realizations with a shared
transition alphabet. This optimization direction differs from both
Hankel-rank minimization and state-complexity bounds for a fixed target
behavior.

The resulting scaling laws are
\(\Theta_{\lambda_1,\lambda_2}(\min\{d^2,T\})\) in the quantum models and
\(\Theta_{\lambda_1,\lambda_2}(\min\{d,T\})\) in the classical models for every
fixed positive false-acceptance tolerance. At zero false acceptance, all four
model classes satisfy the logarithmic law
\(\Theta(\min\{\ln d,T\})\). The precise definitions, parameter quantifiers,
and threshold ranges are given in Section~\ref{sec:model-results}.
The remainder of the paper is organized as follows. Section~\ref{sec:model-results}
introduces the formal bounded-word realization model, its Hankel
reformulation, and the main theorem statements. Section~\ref{sec:converses}
establishes the universal converse bounds. Sections~\ref{sec:classical} and
\ref{sec:quantum} present the classical permutation and quantum Grassmannian
achievability constructions, respectively. Section~\ref{sec:zero} analyzes
the zero-false-acceptance regime and the resulting polynomial word-length
scaling. Section~\ref{sec:discussion} discusses the resulting
dimension--word-length phase diagram, finite-precision stability, and
structural extensions. Finally, Section~\ref{sec:conclusion} concludes the
paper.

\section{Formal Model and Theorem Statements}
\label{sec:model-results}

\subsection{Finite-dimensional transition model with bounded word length}

Throughout this paper, words are read from left to right, so the leftmost
symbol acts first. In the quantum classes, each input letter \(a\) acts by a CPTP map \(\Phi_a\); in the unitary
subclass, \(\Phi_a(X)=U_a X U_a^*\) for some \(U_a\in U(d)\). In the
stochastic classes, each letter \(a\) acts by a row-stochastic matrix \(P_a\).

For a word \(w=a_1\cdots a_\ell\), sequential composition in the quantum
classes is
\begin{equation}
  \Phi_w=\Phi_{a_\ell}\circ\cdots\circ\Phi_{a_1},
  \qquad
  U_w=U_{a_\ell}\cdots U_{a_1}
  \quad\text{in the unitary subclass}.
  \label{eq:chrono}
\end{equation}
In the stochastic classes, we write
\begin{equation}
  P_w=P_{a_1}\cdots P_{a_\ell},
  \qquad
  pP_w=pP_{a_1}\cdots P_{a_\ell},
  \label{eq:chrono-classical}
\end{equation}
where \(p\) is a row distribution. Consequently,
\(\Phi_{uv}=\Phi_v\circ\Phi_u\) and \(P_{uv}=P_uP_v\).

Given an initial state \(\rho\) and a terminal effect \(E\) in the quantum
model, or an initial distribution \(p\) and a terminal response vector \(f\)
in the classical model, a run with encoding prefix \(u\) and test suffix \(v\)
has acceptance probability
\begin{equation}
 \begin{aligned}
  p(u,v)
  &=
  \operatorname{Tr}\!\left[E\Phi_v(\Phi_u(\rho))\right],
  &&\text{quantum},\\
  p(u,v)
  &=pP_uP_vf,
  &&\text{classical}.
 \end{aligned}
 \label{eq:response}
\end{equation}

The prefix and suffix satisfy the separate length bounds
\[
 |u|\le T,
 \qquad
 |v|\le T.
\]
Thus, the encoder and the verifier each receive a maximum word-length budget
\(T\). If instead a single total budget
\(|u|+|v|\le T\) is imposed, the same asymptotic scaling laws hold up to
absolute constant factors.

\begin{definition}[Maximum admissible square Hankel-submatrix size]
\label{def:identification-number}
Let $d,q\ge1$ and $T\ge0$ be integers, and let
$\lambda_1,\lambda_2\in[0,1]$ satisfy
$\lambda_1+\lambda_2<1$.
Let $\mathsf X$ be one of the following model classes:
\begin{itemize}
\item $\mathsf Q$: CPTP maps on
      $M_d(\mathbb C)$, a density operator
      $\rho$, and an effect $0\le E\le I$;
\item $\mathsf U$: the subclass in which every map is conjugation by
      a unitary in $U(d)$;
\item $\mathsf S$: row-stochastic matrices on a $d$-point carrier,
      a distribution $p$, and a response vector $f\in[0,1]^d$;
\item $\mathsf P$: the subclass in which every stochastic matrix is a
      permutation matrix.
\end{itemize}
Here $d$ denotes the state dimension: the Hilbert-space
dimension in the quantum classes and the carrier cardinality in the classical classes.
For an alphabet budget $q$, let
$N_{\mathsf X,q}(d,T;\lambda_1,\lambda_2)$ be the largest $N$ for
which there are at most $q$ shared transition maps, one initial state, one
terminal effect, encoding words $u_1,\ldots,u_N$, and suffix words
$v_1,\ldots,v_N$, with $|u_j|,|v_i|\le T$. We use
$N_{\mathsf X,q}$ for this maximum. With the
first index always denoting the test and the second the message, define
\begin{equation}
 p_{ij}:=
 \begin{cases}
  \Tr[E\Phi_{v_i}(\Phi_{u_j}(\rho))],
      &\mathsf X\in\{\mathsf Q,\mathsf U\},\\
  pP_{u_j}P_{v_i}f,
      &\mathsf X\in\{\mathsf S,\mathsf P\}.
 \end{cases}                                      \label{eq:pij}
\end{equation}
The code is admissible when every one of these $N^2$ responses satisfies
\begin{equation}
 p_{ij}\ge 1-\lambda_1\quad(i=j),
 \qquad
 p_{ij}\le \lambda_2\quad(i\ne j).              \label{eq:IDtable}
\end{equation}
Thus row $i$ specifies a binary match test for candidate $i$ with acceptance
and rejection thresholds given by the two margins.
The initial state, terminal effect, and transitions are shared across all pairs
$(i,j)$.  For each parameter tuple $(d,T,\lambda_1,\lambda_2)$, the
maximization is taken jointly over the initial state, terminal effect,
transition tuple, and both word families. Thus $N_{\mathsf X,q}$ is the
maximum admissible square-submatrix size at state
dimension \(d\), maximum word length \(T\), and alphabet size \(q\).
\end{definition}
\begin{samepage}
For $T\ge0$, write $\mathcal A^{\le T}$ for the words of length at most $T$. For an acceptance series $h$ (the formal power series assigning an acceptance value to each word), let $H_h^{(T)}:=(h(uv))_{u,v\in\mathcal A^{\le T}}$ be the finite restriction (truncation) of the Hankel matrix; its rows and columns are indexed by $\mathcal A^{\le T}$. Define
\(
 \operatorname{M}_{\lambda_1,\lambda_2}(H_h^{(T)})
  :=\max\left\{N:\begin{array}{l}
  U=(u_1,\ldots,u_N)\in(\mathcal A^{\le T})^N,\\
  V=(v_1,\ldots,v_N)\in(\mathcal A^{\le T})^N,\\
  h(u_iv_i)\ge 1-\lambda_1\text{ for all }i,\\
  h(u_jv_i)\le \lambda_2\text{ for all }i\ne j
  \end{array}\right\}.
\)
The maximization is over ordered word lists; when $\lambda_1+\lambda_2<1$, the margins force all selected prefixes and suffixes to be distinct.
Let $\mathcal R_{\mathsf X,q,d}$ be the union over all alphabets $\mathcal A$ with $|\mathcal A|\le q$ and all admissible shared realizations of state dimension $d$. For $\mathsf X\in\{\mathsf Q,\mathsf U\}$, the realization tuple is
$(\rho,E,\{\Phi_a\}_{a\in\mathcal A})$, with $h(w)=\operatorname{Tr}[E\Phi_w(\rho)]$; for $\mathsf X\in\{\mathsf S,\mathsf P\}$, it is
$(p,f,\{P_a\}_{a\in\mathcal A})$, with $h(w)=pP_wf$.
\end{samepage}

\begin{proposition}[Hankel-submatrix reformulation]
\label{prop:hankel-reformulation}
For each class $\mathsf X$, alphabet budget $q$, and parameter tuple
$(d,T,\lambda_1,\lambda_2)$, one has
\(
 N_{\mathsf X,q}(d,T;\lambda_1,\lambda_2)
 =\sup_{h\in\mathcal R_{\mathsf X,q,d}}
   \operatorname{M}_{\lambda_1,\lambda_2}(H_h^{(T)}).
\)
Here the order of a square table is its side length $N$. Thus the
quantity $N_{\mathsf X,q}$ is the largest side length of a square Hankel
submatrix indexed by prefix and suffix words of length at most $T$,
where the test index $i$ labels rows and the message index $j$ labels
columns. Under the standard Hankel orientation $H_h(u,v)=h(uv)$, the matrix
$M$ is the transpose of the selected Hankel submatrix: the prefixes $u_j$ are
its rows and the suffixes $v_i$ are its columns. More formally, for ordered
lists $U=(u_1,\ldots,u_N)$ and $V=(v_1,\ldots,v_N)$, define
\(
 [H_h(U,V)]_{j,i}:=h(u_jv_i).
\)
Then $M=[H_h(U,V)]^\top.$
\end{proposition}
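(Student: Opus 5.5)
The plan is to prove the equality by a direct two-way translation between admissible codes in Definition~\ref{def:identification-number} and admissible word lists for $\operatorname{M}_{\lambda_1,\lambda_2}(H_h^{(T)})$. The whole argument rests on one identity: for every realization and every pair of words, $p(u,v)=h(uv)$. In the quantum classes this follows from $\Phi_{uv}=\Phi_v\circ\Phi_u$ in \eqref{eq:chrono}, which gives $\Tr[E\Phi_v(\Phi_u(\rho))]=\Tr[E\Phi_{uv}(\rho)]=h(uv)$. In the stochastic classes it follows from $P_{uv}=P_uP_v$ in \eqref{eq:chrono-classical}, which gives $pP_uP_vf=pP_{uv}f=h(uv)$. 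Consequently, once the realization and the words are fixed, $p_{ij}=h(u_jv_i)$.

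For the inequality $N_{\mathsf X,q}\le\sup_h\operatorname{M}$, I would take an optimal admissible code. Its realization, with at most $q$ maps indexed by an alphabet $\mathcal A$ satisfying $|\mathcal A|\le q$, defines a series $h\in\mathcal R_{\mathsf X,q,d}$. The lists $U=(u_j)$ and $V=(v_i)$ consist of words in $\mathcal A^{\le T}$. The constraints \eqref{eq:IDtable}, rewritten via $p_{ij}=h(u_jv_i)$, are exactly the constraints in the definition of $\operatorname{M}$: the diagonal condition $h(u_iv_i)\ge1-\lambda_1$ and the off-diagonal condition $h(u_jv_i)\le\lambda_2$ for $i\ne j$. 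Hence $\operatorname{M}(H_h^{(T)})\ge N$.

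For the reverse inequality, I would take any $h\in\mathcal R_{\mathsf X,q,d}$ together with lists attaining $\operatorname{M}(H_h^{(T)})$. The underlying realization supplies at most $q$ shared maps, an initial state and a terminal effect in class $\mathsf X$. Reading the same identity backward shows that these data form an admissible code of size $\operatorname{M}$. Therefore the supremum is at most $N_{\mathsf X,q}$, and in fact it is attained.

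Before comparing the two sides, I would check that both maxima are finite, so that comparing them is meaningful. Since $\lambda_1+\lambda_2<1$, if $u_i=u_j$ for some $i\ne j$, then $h(u_iv_i)=h(u_jv_i)$ would have to be both $\ge1-\lambda_1$ and $\le\lambda_2$, which is impossible. The same argument applies to repeated suffixes. So all prefixes are distinct, all suffixes are distinct, and $N\le|\mathcal A^{\le T}|\le\sum_{\ell\le T}q^\ell$. The case $q=1$ or $T=0$ needs no separate treatment.

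The transpose statement is immediate from the indexing. By definition $[H_h(U,V)]_{j,i}=h(u_jv_i)=p_{ij}=M_{ij}$, so $M=[H_h(U,V)]^\top$.

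The only step requiring care is bookkeeping rather than mathematics. The class restrictions (unitary, permutation) and the alphabet budget must be carried unchanged through both directions, so that the realization class of $h$ matches $\mathsf X$. I would handle this by having the code and the series $h$ share literally the same realization tuple in both directions.
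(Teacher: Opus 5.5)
Your proposal is correct and takes the same route as the paper: both directions reduce to the identity $p_{ij}=h(u_jv_i)$, which follows from the composition conventions $\Phi_{uv}=\Phi_v\circ\Phi_u$ and $P_{uv}=P_uP_v$, with the code and the series sharing the same realization tuple. Your extra check that the margins force distinct prefixes and suffixes, so that both maxima are finite and the supremum is attained, is a useful detail; the paper only notes it in passing after the definition of $\operatorname{M}_{\lambda_1,\lambda_2}$.
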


\begin{proof}
Given an admissible identification code, define
$h(w)=\operatorname{Tr}[E\Phi_w(\rho)]$ in the quantum classes and
$h(w)=pP_wf$ in the classical classes. The composition convention gives
$p_{ij}=h(u_jv_i)$, so the response matrix is the transpose of the
corresponding standard Hankel submatrix and satisfies \eqref{eq:IDtable}.
Conversely, any such linear realization of an acceptance series
provides the initial state, terminal effect, transition maps, and the two word families
required in Definition~\ref{def:identification-number}; the same inequalities make
it an admissible identification code. The two optimizations therefore have
the same value.
\end{proof}

\begin{proposition}[Hankel-rank bounds of the acceptance series]
\label{prop:hankel-rank}
Let \(H_h=(h(uv))_{u,v\in\mathcal A^*}\) be the acceptance Hankel matrix
associated with a class-\(\mathsf X\) realization of state dimension \(d\).
Then, over the real numbers,
\(
\operatorname{rank}H_h\le d
\quad\text{for }\mathsf X\in\{\mathsf S,\mathsf P\},
\qquad
\operatorname{rank}H_h\le d^2
\quad\text{for }\mathsf X\in\{\mathsf Q,\mathsf U\}.
\)
\end{proposition}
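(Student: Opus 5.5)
The approach is to factor the Hankel matrix through the state space as a product of a ``forward'' matrix and a ``backward'' matrix. The rank bound is then the dimension of the real vector space in which the intermediate states live.

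\emph{Classical classes.} For \(\mathsf X\in\{\mathsf S,\mathsf P\}\), the composition convention \eqref{eq:chrono-classical} gives \(h(uv)=pP_uP_vf=(pP_u)(P_vf)\). Define the real matrix \(A\) with rows indexed by \(u\in\mathcal A^*\) and \(d\) columns, whose \(u\)-th row is the row vector \(pP_u\in\mathbb R^{1\times d}\). Define \(B\) with \(d\) rows and columns indexed by \(v\), whose \(v\)-th column is \(P_vf\in\mathbb R^{d}\). Then \(H_h=AB\) entrywise. Since \(A\) has only \(d\) columns, every finite submatrix of \(H_h\) has rank at most \(d\), so \(\operatorname{rank}H_h\le d\). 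Here the rank of an infinite matrix is understood as the supremum of the ranks of its finite submatrices, which is equivalently the dimension of its row space.

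\emph{Quantum classes.} For \(\mathsf X\in\{\mathsf Q,\mathsf U\}\), use \(\Phi_{uv}=\Phi_v\circ\Phi_u\) and the Heisenberg-picture adjoint \(\Phi_v^*\) with respect to the Hilbert--Schmidt pairing. This gives
\[
h(uv)=\Tr\!\left[E\,\Phi_v(\Phi_u(\rho))\right]=\Tr\!\left[\Phi_v^*(E)\,\Phi_u(\rho)\right].
\]
Here \(\Phi_u(\rho)\) and \(\Phi_v^*(E)\) are Hermitian, because CPTP maps and their adjoints are Hermiticity preserving. The space of Hermitian \(d\times d\) matrices is a real vector space of dimension \(d^2\), and \((X,Y)\mapsto\Tr[XY]\) is a real inner product on it. Choose a real orthonormal basis \(\{G_k\}_{k=1}^{d^2}\), for instance generalized Gell-Mann matrices together with \(I/\sqrt d\). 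Expanding both operators in this basis gives
\[
h(uv)=\sum_{k=1}^{d^2} a_k(u)\,b_k(v),\qquad a_k(u)=\Tr[G_k\Phi_u(\rho)],\quad b_k(v)=\Tr[G_k\Phi_v^*(E)],
\]
with all \(a_k,b_k\) real. As in the classical case, \(H_h\) factors through \(\mathbb R^{d^2}\), so \(\operatorname{rank}H_h\le d^2\) over \(\mathbb R\). The unitary subclass is a special case and needs no separate argument.

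\emph{Main obstacle.} The only delicate point is insisting on real rank rather than complex rank. This is where Hermiticity preservation enters: without it, one would only get a factorization through \(M_d(\mathbb C)\), which has real dimension \(2d^2\). I would verify explicitly that \(\Phi^*\) maps Hermitian operators to Hermitian operators. This follows from \(\Phi(X^*)=\Phi(X)^*\), which holds for completely positive maps, together with the definition of the adjoint via \(\Tr[\Phi^*(Y)X]=\Tr[Y\Phi(X)]\).
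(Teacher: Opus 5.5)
Your proof is correct and takes the same route as the paper: it factors \(h(uv)\) through the \(d\)-dimensional row vectors \(pP_u\) in the classical case, and through the \(d^2\)-dimensional real space of Hermitian matrices in the quantum case, where your adjoint \(\Phi_v^*\) and orthonormal basis simply spell out the paper's one-line factorization. One correction to your closing remark: Hermiticity is not actually needed, because \(H_h\) has real entries, and a factorization through \(M_d(\mathbb C)\), of complex dimension \(d^2\), already bounds its complex rank by \(d^2\), and the complex rank of a real matrix equals its real rank.
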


\begin{proof}
For a stochastic realization, write \(h(uv)=pP_uP_vf\). This is a
finite-dimensional linear representation through the \(d\)-dimensional row
vectors \(pP_u\), so every finite Hankel submatrix has rank at most \(d\).
For a quantum realization, restrict the channels to the real vector space of
Hermitian \(d\times d\) matrices, which has dimension \(d^2\). Each
\(\Phi_a\) induces a real linear map on this space, and
\(h(uv)=\langle E,\Phi_v\Phi_u(\rho)\rangle\) is a factorization through that
\(d^2\)-dimensional space. Hence every finite Hankel submatrix has rank at
most \(d^2\).
\end{proof}

The inclusions of model classes imply
\begin{equation}
 N_{\mathsf U,q}\le N_{\mathsf Q,q},
 \qquad
 N_{\mathsf P,q}\le N_{\mathsf S,q}.            \label{eq:inclusions}
\end{equation}

\subsection{Main results}

\begin{theorem}[Dimension--word-length tradeoff for four transitions]
\label{thm:main}
Fix \(\lambda_1\ge0\), \(\lambda_2>0\), and
\(\lambda_1+\lambda_2<1\). There exist positive constants
\(c,C,R_0\), depending only on \(\lambda_1,\lambda_2\), such that, with
\(N_{\mathsf X}=N_{\mathsf X,4}(d,T;\lambda_1,\lambda_2)\),
\begin{align}
 c\min\{d^2,T\}
 &\le \ln N_{\mathsf U}
 \le \ln N_{\mathsf Q}
 \le C\min\{d^2,T\},                              \label{eq:mainQ}\\
 c\min\{d,T\}
 &\le \ln N_{\mathsf P}
 \le \ln N_{\mathsf S}
 \le C\min\{d,T\}.                               \label{eq:mainC}
\end{align}
The first chain holds whenever \(\min\{d^2,T\}\ge R_0\), and the second
whenever \(\min\{d,T\}\ge R_0\). The bounds are uniform over all integer
pairs \(d\ge1\), \(T\ge0\) in the corresponding threshold ranges. For each
parameter tuple, the four transition maps, initial state, and terminal effect
may depend on \(d,T,\lambda_1,\lambda_2\), while the alphabet cardinality
remains four. The lower bounds are attained by four real orthogonal
transitions in the quantum subclass and four permutations in the classical
subclass, both with unit diagonal acceptance. The result establishes that
four maps suffice; it does not assert that four is the minimum alphabet size.
All logarithms are natural.
\end{theorem}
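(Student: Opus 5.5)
The proof splits into converse bounds for the larger classes \(\mathsf Q,\mathsf S\) and achievability constructions for the subclasses \(\mathsf U,\mathsf P\). The inclusions \eqref{eq:inclusions} then close both chains.

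\textbf{Converse.} Each converse has two parts. For the word-counting part, there are at most \(\sum_{\ell\le T}4^\ell\le 4^{T+1}\) distinct prefixes. Since \(\lambda_1+\lambda_2<1\) forces the \(u_j\) to be distinct, \(\ln N\le (T+1)\ln 4\), which is \(O(T)\) once \(T\ge R_0\ge1\).

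For the dimension part, I would use state-space packing. The message states \(\sigma_j=\Phi_{u_j}(\rho)\) (resp.\ \(pP_{u_j}\)) and tests \(F_i=\Phi_{v_i}^*(E)\) (resp.\ \(P_{v_i}f\)) satisfy \(0\le F_i\le I\) and the table \eqref{eq:IDtable}. Hence \(\langle F_i,\sigma_i-\sigma_j\rangle\ge 1-\lambda_1-\lambda_2=:\delta>0\), and so \(\|\sigma_i-\sigma_j\|_1\ge\delta\). The states therefore form a \(\delta\)-separated set in trace norm inside the density-operator set. That set is a convex body of real dimension \(d^2-1\), so a volume argument gives \(N\le(1+4/\delta)^{d^2}\). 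In the classical case the states are \(\delta\)-separated in \(\ell_1\) on a \((d-1)\)-dimensional simplex, giving \(N\le(1+4/\delta)^{d}\). These bounds use nothing about \(\lambda_2>0\) and give \(\ln N\le C\min\{d^2,T\}\), resp.\ \(C\min\{d,T\}\).

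\textbf{Achievability, classical.} Set \(k\asymp\min\{d,T\}\) and work on a carrier of \(k\) points, padding the remaining states as fixed points. Take \(p=e_1\) and messages \(pP_{u_j}\) equal to uniform distributions on \(k/2\)-subsets \(S_j\). Since permutations cannot spread a point mass, I would first make one letter a permutation and let the initial distribution itself be uniform on a fixed half \(S_0\); the messages are then \(\pi_j(S_0)\). Tests use \(f=\one_{S_0}\) and suffix permutations \(\pi_i^{-1}\). This gives diagonal acceptance exactly \(1\) and off-diagonal acceptance \(|S_i\cap S_j|/|S_0|\).

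A random family of \(e^{ck}\) half-subsets has pairwise overlaps at most \((1/2+\epsilon)k/2\) by Chernoff bounds and the probabilistic method. This works provided \(\lambda_2>1/2\). For general \(\lambda_2>0\), I would instead take \(S_0\) of size \(\epsilon k\) with \(\epsilon<\lambda_2/2\), which gives a positive-rate codebook with overlap fraction at most \(\lambda_2\).

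The key issue is realizing each desired permutation \(\pi_j\) by a word of length \(O(T)\) over four fixed permutations. The generators I would choose are a transposition, a \(k\)-cycle, and two "shuffle" permutations that act as binary branching. Words of length \(m\) in the two shuffle maps produce \(2^m\) distinct sets \(\pi_w(S_0)\), and I would argue that random words of length \(\asymp k\) give sets that are approximately independent uniform \(\epsilon k\)-subsets. A cleaner alternative is to encode the bits of the codeword one at a time: letters \(a,b\) respectively "write 0" or "write 1" into the current position while a cyclic shift advances, so that a word of length \(k\) writes an arbitrary subset indicator. Concretely, on the carrier \([k]\times\{0,1\}\), the letters are a cyclic shift of the \(k\) coordinates and a swap at coordinate 1. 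Any subset of the form \(\{(t,x_t)\}\) is then reachable within \(2k\) steps, and the inverse word gives the test. I would use this doubled carrier, so that \(d=2k\), together with a product-type codebook \(x\in\{0,1\}^k\) whose codewords have pairwise Hamming agreement at most \(\lambda_2 k\). When \(\lambda_2<1/2\) this is impossible for binary symbols, so I would instead use a larger block alphabet \(\{0,\dots,r-1\}\) with \(r>2/\lambda_2\), on carrier \([k]\times[r]\), where random \(r\)-ary codes have agreement fraction about \(1/r\). Reaching arbitrary symbols needs an \(r\)-cycle at the current coordinate, so the word length becomes \(O(rk)\), which is still \(O(T)\) for fixed \(\lambda_2\).

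\textbf{Achievability, quantum.} I would do the same with \(k\asymp\sqrt{\min\{d^2,T\}}\) on a \(k\)-dimensional subspace. The initial state is \(\rho=P_0/r\) for a rank-\(r\) projector with \(r=\epsilon k\), and the effect is \(E=P_0\). Messages are \(U_{u_j}P_0U_{u_j}^*/r\) and tests are the inverse conjugates, giving diagonal acceptance \(1\) and off-diagonal acceptance \(\Tr[P_iP_j]/r\). A Grassmannian packing of \(e^{ck^2}\) rank-\(r\) projectors with \(\Tr[P_iP_j]\le\lambda_2 r\) exists by concentration of measure.

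The main obstacle is reaching these projectors by words of length \(O(k^2)\) in four fixed real orthogonal matrices. My plan is to discretize: a word of length \(\asymp k^2\) can realize a product of \(\asymp k^2\) Givens rotations by angles \(\pm\theta\) at every adjacent coordinate pair, using a cyclic shift together with rotations \(R_{\pm\theta}\) on coordinates \(1,2\) and a transposition. The number of such words is \(e^{\Theta(k^2)}\). The probabilistic method then applies over random words: banded random products of reflections/rotations of depth \(O(k^2)\) mix on the Grassmannian, by the results cited on random orthogonal and reflection products. Hence random words give projectors whose pairwise overlaps concentrate near \(r^2/k\), and a deletion argument extracts \(e^{ck^2}\) good ones. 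Ensuring this mixing with \(O(k^2)\) length and controlled constants is the step I expect to be hardest. If it fails, the fallback is an explicit \(\varepsilon\)-net argument: show that reachable orthogonals are \(\eta\)-dense in a suitable sense so that each codebook projector can be approximated, absorbing the error into the margin.
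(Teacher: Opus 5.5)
\textbf{Verdict.} Your converse and your classical construction are correct, but the quantum lower bound has a genuine gap.

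\textbf{What matches the paper.} Word counting plus trace-norm/$\ell_1$ packing is Proposition~\ref{prop:converse}; your constant $1+4/\delta$ in place of $1+2/\delta$ is harmless. Your final classical scheme on $[k]\times[r]$ (cyclic block shift, $r$-cycle at the active block, exact inverse test words, random $r$-ary code with $r>2/\lambda_2$) is the $C^{\pm1},R^{\pm1}$ construction of Proposition~\ref{prop:classical}. You should just name the four letters as the two cycles and their inverses.

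\textbf{The gap.} The quantum lower bound is the core of the theorem. You need an ensemble of words of length $O(k^2)$ over four fixed, inverse-closed orthogonal letters such that two independent draws satisfy $\Pr\{\frac1r\Tr(P_iP_j)>\lambda_2\}\le e^{-ck^2}$. Only a tail at rate $k^2$ lets the deletion or union bound retain $e^{ck^2}$ words. Your appeal to ``mixing on the Grassmannian'' does not give this, for three reasons.
\begin{enumerate}
\item Mixing results for random reflection or rotation products concern continuously distributed factors and closeness in distribution. Your ensemble is supported on finitely many words, and closeness to Haar at constant or polynomial accuracy does not by itself transfer an $e^{-ck^2}$ tail; you would need exponentially small distance.
\item A direct concentration argument is not automatic for your gates either. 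The uniform bound on the effect on the normalized overlap of flipping one gate $R_\theta\leftrightarrow R_{-\theta}$ is of order $\theta/r$. A bounded-difference or log-Sobolev estimate over $\asymp k^2$ gates therefore gives a variance proxy of order $k^2\theta^2/r^2=\Theta(1)$, with no $k^2$ in the exponent.
\item The $\varepsilon$-net fallback would need an efficient-universality theorem for four fixed matrices at word length $O(k^2)$, which is not available.
\end{enumerate}
Separately, your letter list (shift, $R_{\pm\theta}$, transposition) is not inverse closed. Exact inverse test words of length $O(k^2)$ require the inverse shift as a letter, so the transposition has to be dropped to stay at four letters.

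\textbf{How the paper closes it.} It chooses gates for which both the mean and the fluctuations are computable.
\begin{enumerate}
\item The alphabet is $\{S,S^{-1},Z,H\}$, with $H=I-2uu^\top$, $u$ the normalized all-ones vector, and $Z$ a one-coordinate sign flip. The sign matrix $D_s$ is a word of length at most $2k$. Hence each signed Householder reflection $H_s=D_sHD_s$ costs $O(k)$ letters and carries $k$ independent random bits.
\item The Rademacher fourth-moment identity gives $\mathbb EQ_m=\mathcal L^m(P)$ exactly, and $\mathbb EF_m=\frac rk+(1-\frac rk)(1-\frac4k)^m$. So $n\asymp k\ln(8/\lambda)$ reflections push the mean overlap below $\lambda/4$.
\item The relative product $U_i^\top U_j$ is again a product of $2n$ i.i.d.\ reflections, which reduces the pairwise overlap to $F_{2n}$.
\item A bit flip moves the direction $v_s$ by $O(k^{-1/2})$ in a single coordinate, and $\sum_a(Mv)_a^2\le1$. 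Hence the squared discrete gradient over the $k$ bits of one reflection is $O(1/(r^2k))$, and $O(n/(r^2k))=O(k^{-2})$ in total.
\end{enumerate}
Lemma~\ref{lem:booleanLSI} then yields exactly the needed $e^{-ck^2}$ tail with words of length $O(k^2)$. This combination of gates, exact mean computation and squared-gradient concentration is the missing idea in your quantum construction.
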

The constants are uniform for each fixed $\lambda_2>0$; at $\lambda_2=0$,
support separation gives a distinct scaling law.

\begin{remark}[Interpretation of the four-transition result]
The theorem establishes \(q=4\) as a sufficient four-symbol alphabet, with transition maps optimized for each parameter tuple and fixed error margins. The four maps are explicit, while probabilistic selection produces exponential-size word families for both lower bounds; these combinatorial codebooks establish the existence result. Efficient indexing and deterministic extraction remain open algorithmic questions. The same lower bounds hold for every $q\ge4$ by padding with unused letters. The fixed-generator, uniform-in-dimension problem and the minimization of the alphabet cardinality are separate structural extensions.
\end{remark}

\paragraph{Optimization result and relation to Hankel rank}
The main theorem is an optimization result over the acceptance series, the
shared transition maps, the initial state, the terminal functional, and
bounded prefix--suffix word lists. It determines the optimal order of a square
Hankel submatrix satisfying fixed diagonal and off-diagonal margins under
simultaneous state-dimension and word-length constraints. The theorem
complements the Hankel-rank bound by optimizing the realization and
determining the sharp submatrix-size scaling, with matching reversible
constructions.

\begin{corollary}[Scaling law for four-symbol Hankel submatrices]
\label{cor:hankel-law}
Under the hypotheses of Theorem~\ref{thm:main}, let
\(\mathcal H_{\mathsf X,4}(d,T;\lambda_1,\lambda_2)\) denote the largest
order of a square Hankel submatrix satisfying the fixed margins in a
class-\(\mathsf X\) acceptance series, with row and column words of length at
most \(T\). By Proposition~\ref{prop:hankel-reformulation},
\(\mathcal H_{\mathsf X,4}=N_{\mathsf X,4}\). In the corresponding threshold
ranges,
\begin{align}
\ln \mathcal H_{\mathsf Q,4}
 &=\Theta_{\lambda_1,\lambda_2}\!\left(\min\{d^2,T\}\right), \label{eq:corQ}\\
\ln \mathcal H_{\mathsf S,4}
 &=\Theta_{\lambda_1,\lambda_2}\!\left(\min\{d,T\}\right). \label{eq:corS}
\end{align}
The same orders hold for the unitary and permutation subclasses. Moreover,
the ambient acceptance Hankel matrices have ranks at most \(d^2\) in the
quantum classes and at most \(d\) in the classical classes.
\end{corollary}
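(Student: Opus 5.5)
The corollary follows from results already stated. The plan is to assemble them; no new estimate is needed.

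First, I would identify \(\mathcal H_{\mathsf X,4}\) with \(N_{\mathsf X,4}\). The quantity \(\mathcal H_{\mathsf X,4}(d,T;\lambda_1,\lambda_2)\) is, by definition, the supremum over class-\(\mathsf X\) acceptance series \(h\) realizable with at most four letters and state dimension \(d\) of \(\operatorname{M}_{\lambda_1,\lambda_2}(H_h^{(T)})\). This is exactly the right-hand side of Proposition~\ref{prop:hankel-reformulation} with \(q=4\). That proposition therefore gives \(\mathcal H_{\mathsf X,4}=N_{\mathsf X,4}\) for every parameter tuple and every class \(\mathsf X\in\{\mathsf Q,\mathsf U,\mathsf S,\mathsf P\}\). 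The transpose convention \(M=[H_h(U,V)]^\top\) is irrelevant here, since transposition preserves the side length of a square table.

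Next, I would apply Theorem~\ref{thm:main}. Under its hypotheses (\(\lambda_2>0\), \(\lambda_1+\lambda_2<1\)), the chain \eqref{eq:mainQ} gives \(c\min\{d^2,T\}\le\ln N_{\mathsf U,4}\le\ln N_{\mathsf Q,4}\le C\min\{d^2,T\}\) whenever \(\min\{d^2,T\}\ge R_0\). Here \(c,C,R_0\) depend only on \(\lambda_1,\lambda_2\). Together with the first step, this yields \eqref{eq:corQ}, and the same sandwich gives the identical order for the unitary subclass \(\mathcal H_{\mathsf U,4}\). In the same way, \eqref{eq:mainC} in the range \(\min\{d,T\}\ge R_0\) yields \eqref{eq:corS} together with the order for the permutation subclass \(\mathcal H_{\mathsf P,4}\). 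The inclusions \eqref{eq:inclusions} are already built into the chains, so the subclass statements need no separate argument.

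Finally, the rank statement is Proposition~\ref{prop:hankel-rank} applied to the ambient Hankel matrix \(H_h\) of any realization in the relevant class. The only point requiring care is bookkeeping rather than mathematics. The \(\Theta\) constants must be uniform in \(d\) and \(T\), and the threshold ranges must match those of Theorem~\ref{thm:main}; both hold because the constants there depend only on \((\lambda_1,\lambda_2)\). I expect no real obstacle: all the substantive work lies in the converse and achievability proofs of Theorem~\ref{thm:main}.
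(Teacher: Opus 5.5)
Your proposal is correct and matches the paper's argument. The corollary follows by combining three results: Proposition~\ref{prop:hankel-reformulation} with $q=4$ identifies $\mathcal H_{\mathsf X,4}$ with $N_{\mathsf X,4}$, the two chains of Theorem~\ref{thm:main} give the orders in their threshold ranges (already covering the unitary and permutation subclasses), and Proposition~\ref{prop:hankel-rank} gives the rank bounds.
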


The rank bounds in Proposition~\ref{prop:hankel-rank} and the size laws in
Corollary~\ref{cor:hankel-law} describe different aspects of the same table.
Rank controls the dimension of a linear realization of the acceptance
series, while the order of a fixed-margin square submatrix is governed by the
state-space packing and word-length constraints. Suitable rank-$D$ Gram matrices can contain an exponentially large
fixed-tolerance square submatrix when their off-diagonal entries are
allowed a fixed positive tolerance. The exponential orders in
Corollary~\ref{cor:hankel-law} therefore come from the state-space packing
and bounded-length word constructions, while the rank proposition identifies the
ambient realization dimension in which those tables are embedded.

\begin{remark}[Computational content of the Hankel reformulation]
For a fixed acceptance series, Proposition~\ref{prop:hankel-rank} is a
realization-dimension statement. Theorem~\ref{thm:main} addresses a different
outer optimization: it ranges over the shared transition tuple and the
initial and terminal functionals, then measures the growth of the largest
square Hankel submatrix satisfying the prescribed diagonal and off-diagonal margins inside the corresponding
prefix and suffix word sets of length at most $T$. Thus the main theorem is an
optimal scaling law for low-dimensional realizations with bounded word length;
rank
minimization and automaton simulation cost are separate optimization problems.
\end{remark}

Figure~\ref{fig:phase-diagram} summarizes the resulting maximum-word-length regimes and
the sharp change at zero off-diagonal acceptance.

\begin{figure*}[t]
\centering
\includegraphics[width=\textwidth]{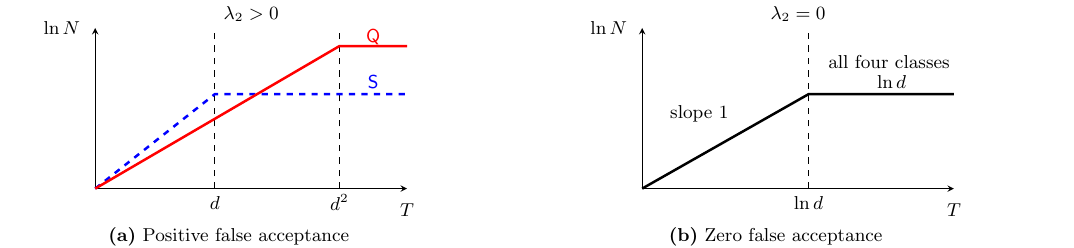}
\caption{Schematic dimension--word-length phase diagram.  In panel (a),
${\mathsf Q}$ denotes the quantum/CPTP model and ${\mathsf S}$ denotes the
classical/stochastic model. For fixed positive off-diagonal tolerance,
the solid red curve is the quantum model and the dashed blue curve is the
classical model; their laws are respectively
$\Theta(\min\{d^2,T\})$ and $\Theta(\min\{d,T\})$. At zero
off-diagonal tolerance, all four model classes obey
$\Theta(\min\{\ln d,T\})$.  The curves depict the asymptotic scaling orders;
finite-dimensional constants are suppressed.}
\label{fig:phase-diagram}
\end{figure*}

\begin{theorem}[Zero false-acceptance case]               \label{thm:zero}
For every pair of integers $d\ge1$, $T\ge0$, every
$0\le\lambda_1<1$, and
$\mathsf X\in\{\mathsf P,\mathsf S,\mathsf U,\mathsf Q\}$,
\begin{equation}
 \begin{split}
 \min\{d,2^{T+1}-1\}
 &\le N_{\mathsf X,4}(d,T;\lambda_1,0)\\
 &\le\min\left\{d,\frac{4^{T+1}-1}{3}\right\}.
 \end{split}                                      \label{eq:zerobounds}
\end{equation}
More explicitly, there are absolute constants $c_0,C_0>0$ such that,
uniformly for all $d\ge3$, $T\ge1$, $0\le\lambda_1<1$, and all four
model classes,
\begin{equation}
 c_0\min\{\ln d,T\}
 \le \ln N_{\mathsf X,4}(d,T;\lambda_1,0)
 \le C_0\min\{\ln d,T\}.                       \label{eq:zerolaw}
\end{equation}
\end{theorem}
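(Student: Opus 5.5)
The plan is to prove the two inequalities in \eqref{eq:zerobounds} separately and then read off \eqref{eq:zerolaw} by taking logarithms. For the upper bound it suffices to treat the largest classes $\mathsf Q$ and $\mathsf S$, by \eqref{eq:inclusions}. For the lower bound it suffices to treat the smallest class $\mathsf P$: a permutation realization on $d$ points embeds into $\mathsf U$ by using permutation unitaries, the basis state $\rho=|1\rangle\langle1|$ and the diagonal effect $E=\diag(f)$, so it also lies in $\mathsf S$ and $\mathsf Q$.

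\emph{Upper bound.} The word-counting part is immediate. When $\lambda_1+\lambda_2<1$ the prefixes $u_j$ must be pairwise distinct, as noted after Definition~\ref{def:identification-number}, so $N\le|\mathcal A^{\le T}|\le (4^{T+1}-1)/3$.

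For $N\le d$, write $\sigma_j=\Phi_{u_j}(\rho)$ and $F_i=\Phi_{v_i}^*(E)\ge0$ in the quantum case. With $\lambda_2=0$ we have $\Tr[F_i\sigma_j]=0$ for $i\ne j$, which forces $F_i^{1/2}\sigma_j^{1/2}=0$, i.e.\ $\operatorname{ran}\sigma_j\subseteq\ker F_i$. The diagonal condition gives $\Tr[F_i\sigma_i]\ge1-\lambda_1>0$, so $F_i^{1/2}\sigma_i^{1/2}\neq0$. Hence we can pick $y_i\in\operatorname{ran}\sigma_i$ with $F_i^{1/2}y_i\ne0$.

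These $y_i$ are linearly independent. Suppose $\sum_j c_jy_j=0$ and apply $F_i^{1/2}$. Every term with $j\ne i$ vanishes, which leaves $c_iF_i^{1/2}y_i=0$ and therefore $c_i=0$. So $N\le d$. The classical case is the same argument with diagonal matrices: pick a point $x_i\in\supp(pP_{u_i})\cap\supp(P_{v_i}f)$, and the zero off-diagonal entries force the points $x_i$ to be distinct. This argument uses only positivity of the diagonal, so it holds for every $\lambda_1<1$.

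\emph{Lower bound (the main step).} Let $N=\min\{d,2^{T+1}-1\}$ and identify the points $1,\dots,N$ of the carrier with the nodes of a heap-ordered binary tree, where node $k$ has children $2k$ and $2k+1$. Define partial injections $a:k\mapsto2k$ (for $2k\le N$) and $b:k\mapsto2k+1$ (for $2k+1\le N$), and extend each to a permutation of $[d]$. The remaining two letters are the inverse permutations $a^{-1}$ and $b^{-1}$.

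Take initial distribution $p=e_1$ and response $f=e_1$. Let $u_j$ be the root-to-$j$ path spelled in $a,b$; its length is $\lfloor\log_2 j\rfloor\le T$, and $e_1P_{u_j}=e_j$. Let $v_i$ be the reversed path spelled in the inverse letters, so that the permutation $P_{v_i}$ sends $i$ to $1$. Because $P_{v_i}$ is a bijection, it sends $j$ to $1$ only when $j=i$, so $p_{ij}=e_jP_{v_i}e_1^\top=\delta_{ij}$. This gives unit diagonal acceptance and exact zero off-diagonal entries, for any $\lambda_1$.

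The step needing the most care is checking the left-to-right composition convention \eqref{eq:chrono-classical}: one must confirm that the extended permutations really agree with $a$ and $b$ on every node visited along the paths, and that the inverse words undo the paths in the correct order. Both follow because each path stays inside the domain of the partial injections.

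\emph{Logarithmic law.} Taking logarithms of the lower bound gives
\[
\ln N\ge\min\{\ln d,(T+1)\ln2\}\ge\ln 2\cdot\min\{\ln d,T\}.
\]
The upper bound gives
\[
\ln N\le\min\{\ln d,(T+1)\ln4\}\le 2\ln4\cdot\min\{\ln d,T\}
\]
for $T\ge1$. The assumption $d\ge3$ ensures $\ln d\ge\ln 3$ is bounded away from $0$, so the constants $c_0,C_0$ can be chosen uniformly over all $d\ge3$, $T\ge1$, $0\le\lambda_1<1$ and all four model classes.
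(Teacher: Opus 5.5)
Your proposal is correct and follows essentially the same route as the paper. Both use word counting together with a support/biorthogonality argument for $N\le d$; you apply $F_i^{1/2}$ where the paper uses the functionals $\langle F_ix_i,\cdot\rangle$, and you count distinct prefixes rather than suffixes. Both then use the breadth-first (heap-ordered) binary tree, realized by two extended partial injections and their inverses, with the exact inverse word as the test.

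One small slip in the logarithmic law: since $\ln(2^{T+1}-1)<(T+1)\ln 2$, your displayed inequality $\ln N\ge\min\{\ln d,(T+1)\ln 2\}$ is false whenever $d\ge 2^{T+1}$. Use $\ln(2^{T+1}-1)\ge T\ln 2$ instead, which still gives $c_0=\ln 2$.
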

The lower bound is realized by a binary prefix tree whose two partial transition
maps extend to permutations of the carrier; inverse letters provide the remaining two
symbols. The corresponding permutation unitaries give the unitary subclass, and
the stochastic and CPTP classes contain these constructions.

\begin{corollary}[Double-logarithmic growth for polynomial word length]
\label{cor:poly-horizon}
Fix errors as in Theorem~\ref{thm:main}, and let $(T_d)_{d\ge1}$ be a
sequence of nonnegative integers satisfying
\begin{equation}
 T_d=d^{\tau+o(1)},\qquad \tau>0.                  \label{eq:polyhorizon}
\end{equation}
For $\mathsf X\in\{\mathsf Q,\mathsf U,\mathsf S,\mathsf P\}$, write
\begin{equation}
 \gamma_{\mathsf X}:=
 \lim_{d\to\infty}
 \frac{\ln\ln N_{\mathsf X,4}
 (d,T_d;\lambda_1,\lambda_2)}{\ln d}.
 \label{eq:polydef}
\end{equation}
Then these limits exist and
\begin{align}
 \gamma_{\mathsf Q}=\gamma_{\mathsf U}&=\min\{\tau,2\},
                                                        \label{eq:polyQ}\\
 \gamma_{\mathsf S}=\gamma_{\mathsf P}&=\min\{\tau,1\}.
                                                        \label{eq:polyC}
\end{align}
In particular, every $\tau>1$ yields a strict quantum--classical
exponent separation under the same four-letter budget.
\end{corollary}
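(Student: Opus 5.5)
The plan is to derive Corollary~\ref{cor:poly-horizon} directly from the two-sided bounds of Theorem~\ref{thm:main}, applied along the sequence $T_d$. No new construction is needed. The only work is to check that the threshold conditions hold for large $d$, and that the constants $c,C$ wash out after taking $\ln\ln$ and dividing by $\ln d$. Fix $\lambda_1,\lambda_2$ as in Theorem~\ref{thm:main}; the constants $c,C,R_0$ are then fixed.

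First I verify the threshold ranges. Since $T_d=d^{\tau+o(1)}$ with $\tau>0$, we have $T_d\to\infty$. Hence $\min\{d^2,T_d\}\ge R_0$ and $\min\{d,T_d\}\ge R_0$ for all sufficiently large $d$, and both chains \eqref{eq:mainQ} and \eqref{eq:mainC} apply eventually. This gives, for large $d$,
\[
 c\,m_{\mathsf Q}(d)\le \ln N_{\mathsf U,4}\le \ln N_{\mathsf Q,4}\le C\,m_{\mathsf Q}(d),
 \qquad m_{\mathsf Q}(d):=\min\{d^2,T_d\},
\]
and the analogous chain for $\mathsf P,\mathsf S$ with $m_{\mathsf S}(d):=\min\{d,T_d\}$. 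In particular $\ln N_{\mathsf X,4}>0$ eventually, so $\ln\ln N$ is well defined.

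Next I take logarithms once more:
\[
 \ln c+\ln m(d)\le \ln\ln N_{\mathsf X,4}\le \ln C+\ln m(d).
\]
Dividing by $\ln d$ sends the constant terms to zero. It therefore suffices to show $\ln m_{\mathsf Q}(d)/\ln d\to\min\{2,\tau\}$ and $\ln m_{\mathsf S}(d)/\ln d\to\min\{1,\tau\}$. Since the logarithm is monotone, $\ln\min\{d^k,T_d\}=\min\{k\ln d,\ln T_d\}$. The hypothesis \eqref{eq:polyhorizon} means precisely $\ln T_d/\ln d\to\tau$. Because $\min$ is continuous, $\min\{k,\ln T_d/\ln d\}\to\min\{k,\tau\}$ for $k=1,2$. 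This shows the limits in \eqref{eq:polydef} exist and take the values \eqref{eq:polyQ}--\eqref{eq:polyC}. The sandwich between subclass and full class handles $\mathsf U$ alongside $\mathsf Q$ and $\mathsf P$ alongside $\mathsf S$.

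For the final sentence: when $\tau>1$, $\min\{\tau,2\}>1=\min\{\tau,1\}$, so $\gamma_{\mathsf Q}>\gamma_{\mathsf S}$.

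The only delicate point, which is minor, is interpreting $d^{\tau+o(1)}$ correctly when $T_d$ may be $0$ for small $d$. The hypothesis forces $T_d\ge1$ eventually, and limits only concern large $d$. One must also use the uniformity of $c,C$ in $(d,T)$ asserted in Theorem~\ref{thm:main}, so that the same constants serve along the whole sequence $(d,T_d)$.
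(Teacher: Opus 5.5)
Your proposal is correct and follows the paper's proof: apply the uniform two-sided bounds of Theorem~\ref{thm:main} along $(d,T_d)$, take one more logarithm, and divide by $\ln d$, so that the constants vanish and $\ln\min\{d^k,T_d\}/\ln d\to\min\{k,\tau\}$. The paper compresses all of this into the single assertion $\ln N=d^{\min\{\tau,k\}+o(1)}$, whereas you also spell out the eventual threshold check ($T_d\to\infty$) and the continuity-of-$\min$ step.
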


\section{Universal converses}
\label{sec:converses}

For $q\ge2$, put
\begin{equation}
 \cW_q(T)=\sum_{t=0}^Tq^t=\frac{q^{T+1}-1}{q-1}.  \label{eq:wordcount}
\end{equation}

\begin{lemma}[Packing in a normed affine space]    \label{lem:volume}
Let $x_1,\ldots,x_N$ lie in a $D$-dimensional affine space equipped
with a norm.  If its set of admissible points has diameter at most $2$
and $\lVert x_i-x_j\rVert\ge2\delta$ for $i\ne j$, where
$\delta>0$, then
\begin{equation}
 N\le\left(1+\frac2\delta\right)^D.               \label{eq:volbound}
\end{equation}
\end{lemma}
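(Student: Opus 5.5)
The plan is the standard volume comparison. Fix a base point and identify the affine space with $\mathbb R^D$ via a translation, so that the norm $\lVert\cdot\rVert$ becomes a norm on $\mathbb R^D$ with closed unit ball $B$. Fix any Lebesgue measure $\operatorname{vol}$ on $\mathbb R^D$. Since $B$ is a convex body, $\operatorname{vol}(B)\in(0,\infty)$. Under the translation-and-dilation rule, $\operatorname{vol}(x+rB)=r^D\operatorname{vol}(B)$.

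First, the separation hypothesis $\lVert x_i-x_j\rVert\ge 2\delta$ makes the open balls $x_i+\delta B^\circ$ pairwise disjoint. If a point $y$ lay in two of them, the triangle inequality would give $\lVert x_i-x_j\rVert<2\delta$. I will use open balls, or note that boundaries have measure zero, to avoid a boundary-case issue.

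Second, I confine all these balls to one bounded set. Let $K$ be the set of admissible points, with diameter at most $2$. Fix $x_1\in K$. Every $x_i\in K$ then satisfies $\lVert x_i-x_1\rVert\le 2$, so each ball $x_i+\delta B$ lies inside $x_1+(2+\delta)B$.

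Third, disjointness and additivity of volume give
\[
N\,\delta^D\operatorname{vol}(B)\le(2+\delta)^D\operatorname{vol}(B),
\]
and hence $N\le\bigl((2+\delta)/\delta\bigr)^D=(1+2/\delta)^D$, which is exactly \eqref{eq:volbound}.

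The only step needing care is the containment. A diameter bound of $2$ only gives radius $2$ around a chosen admissible point; it does not give radius $1$ around some center. This is why the constant is $1+2/\delta$ rather than $1+1/\delta$, and it matches the stated bound, so no Jung-type centering argument is needed. In later applications the norm will be the trace norm on the $(d^2-1)$-dimensional affine space of unit-trace Hermitian matrices, or the $\ell_1$ norm on the $(d-1)$-dimensional simplex hyperplane, and in both the state space has diameter $2$. The lemma needs only finite-dimensionality and a genuine norm.
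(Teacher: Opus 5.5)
Your proposal is correct and follows essentially the same route as the paper. Both place the origin at $x_1$, observe that the open $\delta$-balls about the $x_i$ are pairwise disjoint and all lie in the $(2+\delta)$-ball about $x_1$, and compare Lebesgue volumes using translation invariance and degree-$D$ homogeneity to get $N\delta^D\le(2+\delta)^D$.
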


\begin{proof}
Translate the affine space so that $x_1$ is the origin.  The open norm
balls of radius $\delta$ about the $x_i$'s are disjoint, and every
such ball is contained in the ball of radius $2+\delta$ about
$x_1$.  Translation invariance and homogeneity of $D$-dimensional
Lebesgue measure give
$N\delta^D\le(2+\delta)^D$, proving
\eqref{eq:volbound}.
\end{proof}

\begin{proposition}[State packing and word counting] \label{prop:converse}
Let $\delta=1-\lambda_1-\lambda_2>0$.  For every $q\ge2$,
write
$N_{\mathsf X}=N_{\mathsf X,q}(d,T;\lambda_1,\lambda_2)$.  Then
\begin{align}
 \ln N_{\mathsf Q}
 &\le \min\left\{
 (d^2-1)\ln\left(1+\frac2\delta\right),
 \ln\cW_q(T)\right\},                            \label{eq:Qconv}\\
 \ln N_{\mathsf S}
 &\le \min\left\{
 (d-1)\ln\left(1+\frac2\delta\right),
 \ln\cW_q(T)\right\}.                            \label{eq:Sconv}
\end{align}
The same bounds hold for the corresponding subclasses
$\mathsf U$ and $\mathsf P$.
\end{proposition}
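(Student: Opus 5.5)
The bound has two parts, the word-counting term and the packing term, and I prove them separately; the right-hand side is then the minimum of the two.

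\textbf{Word counting.} Admissibility forces the selected prefixes $u_1,\ldots,u_N$ to be pairwise distinct. Indeed, if $u_j=u_k$ with $j\ne k$, then the responses $p_{jj}$ and $p_{jk}$ coincide. The first is at least $1-\lambda_1$ and the second is at most $\lambda_2$, which contradicts $\lambda_1+\lambda_2<1$. Since every prefix has length at most $T$ over an alphabet of size at most $q$, we get $N\le\cW_q(T)$. This gives the second term in both \eqref{eq:Qconv} and \eqref{eq:Sconv}.

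\textbf{State packing.} I apply Lemma~\ref{lem:volume} to the message states $\sigma_j:=\Phi_{u_j}(\rho)$ in the quantum case and $\sigma_j:=pP_{u_j}$ in the classical case. The tests are $F_i:=\Phi_{v_i}^*(E)$, where $\Phi^*$ is the Heisenberg-picture adjoint, and $F_i:=P_{v_i}f$ classically.

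\emph{Tests stay in the unit interval.} Since the adjoint of a CPTP map is unital and positive, $0\le F_i\le I$. Since $P_{v_i}$ is row-stochastic, $F_i\in[0,1]^d$.

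\emph{Norm and ambient space.} Take the trace norm, or the $\ell^1$ norm in the classical case. The states lie in the affine hyperplane of trace-one Hermitian matrices, which has real dimension $D=d^2-1$, or in the probability simplex, which has affine dimension $D=d-1$. The state set has diameter at most $2$ in these norms.

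\emph{Separation.} For $i\ne j$,
\[
\delta=1-\lambda_1-\lambda_2\le \langle F_i,\sigma_i-\sigma_j\rangle \le \tfrac12\lVert\sigma_i-\sigma_j\rVert_1 .
\]
The last inequality holds because $0\le F\le I$ and $\Tr(\sigma_i-\sigma_j)=0$. Indeed, $\sup_{0\le F\le I}\Tr[F(\sigma-\tau)]$ equals the positive part of $\sigma-\tau$, which is half the trace norm. The classical analogue follows from the same reasoning with vectors. Hence $\lVert\sigma_i-\sigma_j\rVert\ge2\delta$.

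\emph{Applying the lemma.} Lemma~\ref{lem:volume} then gives $N\le(1+2/\delta)^{D}$. This is the first term of each bound.

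The subclasses $\mathsf U$ and $\mathsf P$ inherit the bounds through the inclusions \eqref{eq:inclusions}.

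The only delicate step is checking the hypotheses of Lemma~\ref{lem:volume}. The lemma's volume argument bounds the whole disjoint balls inside a ball of radius $2+\delta$. So I only need all points within distance $2$ of $x_1$ in a $D$-dimensional affine space, which holds since the diameter of the state set is at most $2$. I also need that the affine dimension is $d^2-1$ rather than $d^2$; this uses the trace-one constraint, which is exactly where the $-1$ in the exponent comes from.
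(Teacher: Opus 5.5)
Your proof is correct and follows the paper's argument: the same pulled-back tests $F_i=\Phi_{v_i}^*(E)$ (resp.\ $P_{v_i}f$), the variational bound $\tfrac12\lVert\sigma_i-\sigma_j\rVert_1\ge\delta$, Lemma~\ref{lem:volume} in the affine space of dimension $d^2-1$ (resp.\ $d-1$), and a word count. The only difference is cosmetic: you prove that the prefixes are distinct via $p_{jj}=p_{jk}$, whereas the paper proves that the suffixes are distinct, and both give the same bound $\cW_q(T)$.
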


\begin{proof}
In the quantum model let
$\rho_j=\Phi_{u_j}(\rho)$ and pull the terminal effect back along suffix $v_i$ to
$F_i=\Phi_{v_i}^*(E)$.  Complete positivity and unitality of the
adjoint give $0\le F_i\le I$.  For $j\ne k$, test $j$ and
\eqref{eq:IDtable} give
\begin{equation}
 \Tr[F_j(\rho_j-\rho_k)]\ge\delta.
\end{equation}
The variational characterization of trace distance implies
\begin{equation}
 \frac12\lVert\rho_j-\rho_k\rVert_1\ge\delta.      \label{eq:trsep}
\end{equation}
Density operators form a subset of diameter at most $2$ in the trace
norm inside the real affine space of trace-one Hermitian matrices, whose
dimension is $d^2-1$.  Lemma~\ref{lem:volume} proves the first term of
\eqref{eq:Qconv}.

Classically, write $x_j=pP_{u_j}$ and
$g_i=P_{v_i}f\in[0,1]^d$.  Then
$(x_j-x_k)g_j\ge\delta$, and the variational formula for total
variation gives
\begin{equation}
 \frac12\lVert x_j-x_k\rVert_1\ge\delta.
\end{equation}
The probability simplex has affine dimension $d-1$ and $\ell_1$
diameter $2$, so Lemma~\ref{lem:volume} gives the first term of
\eqref{eq:Sconv}.

The suffix count follows directly from word uniqueness.  If $v_i=v_k$ for distinct $i,k$, then
on prefix $u_i$ the same response is at least $1-\lambda_1$, because
test $i$ matches, and at most $\lambda_2$, because test $k$ is
nonmatching.  The gap $\delta>0$ makes these bounds incompatible.  Hence
the $N$ suffix words are
syntactically distinct.  The total number of words of length at most $T$ is
$\cW_q(T)$, which proves the second term in both bounds.  The subclass
bounds follow from \eqref{eq:inclusions}.
\end{proof}

\section{Classical construction of a Hankel submatrix with bounded word length}
\label{sec:classical}

The classical construction uses two cyclic permutations on a carrier partitioned into blocks:
one moves through the block addresses and the other applies a symbol offset at
the current address. An encoded word therefore stores one code symbol in each
block. The inverse permutation word at the verifier removes the candidate
offset, and the common terminal indicator test accepts according to normalized
coordinate agreement of the two codewords. A positive-rate \(q\)-ary code with a
fixed relative Hamming-distance bound supplies the required off-diagonal margin.

\begin{lemma}[A positive-rate q-ary code with prescribed relative distance]       \label{lem:agreement}
Fix an integer $Q_0\ge2$ and a number
$Q_0^{-1}<\beta<1$.  There are constants
$\kappa>0$ and $m_0<\infty$ such that, for every $m\ge m_0$,
there is a code $\cC_m\subseteq\mathbb Z_{Q_0}^m$ with
\begin{equation}
 |\cC_m|\ge e^{\kappa m}
 \quad\text{and}\quad
  \frac1m|\{b:c_b\ne c_b'\}|>1-\beta
 \quad(c\ne c').                                  \label{eq:agreement}
\end{equation}
\end{lemma}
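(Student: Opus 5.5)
The plan is a random-coding (Gilbert--Varshamov type) argument with expurgation. The condition to secure is that distinct codewords agree in fewer than $\beta m$ coordinates. For two independent uniform words $c,c'\in\mathbb Z_{Q_0}^m$, the agreement count $A=|\{b:c_b=c'_b\}|$ is $\mathrm{Bin}(m,Q_0^{-1})$. Since $\beta>Q_0^{-1}$, the Chernoff--Hoeffding bound gives
\[
 \Pr[A\ge\beta m]\le e^{-mD},
\]
where $D:=D(\beta\Vert Q_0^{-1})>0$ is the binary relative entropy. Hoeffding's weaker form $D\ge 2(\beta-Q_0^{-1})^2$ would also do. In either case the constant depends only on $Q_0$ and $\beta$.

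Next, draw $M=\lceil e^{\kappa' m}\rceil$ words independently and uniformly, with $\kappa'=D/2$. Call an ordered index pair $(s,t)$ with $s\ne t$ \emph{bad} if the two words agree in at least $\beta m$ coordinates; this includes the case of identical words, where $A=m\ge\beta m$. The expected number of bad pairs is at most
\[
 M^2e^{-mD}\le (e^{\kappa' m}+1)^2e^{-2\kappa' m}\le 4,
\]
which is far smaller than $M/2$ once $m$ is large. So some realization has at most $M/2$ bad pairs.

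Expurgation then finishes the construction. Delete one index from each bad pair. The surviving words are pairwise distinct, and every two of them agree in fewer than $\beta m$ coordinates. Hence $\frac1m|\{b:c_b\ne c'_b\}|>1-\beta$ holds strictly, which is the second condition in \eqref{eq:agreement}.

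For the size bound, at least $M/2\ge \tfrac12 e^{\kappa' m}$ words survive. With $\kappa=\kappa'/2$ this is at least $e^{\kappa m}$ as soon as $e^{\kappa' m/2}\ge2$. Let $m_0$ be the larger of this threshold and the one needed above for $4\le M/2$.

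The only delicate points are the strictness of the inequality and making sure identical words are excluded. Both are handled because a pair is bad exactly when $A\ge\beta m$. A deterministic greedy Gilbert--Varshamov argument with a Hamming-ball volume bound would work equally well, but the probabilistic version matches the paper's stated use of the probabilistic method.
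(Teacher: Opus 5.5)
Your proof is correct and follows essentially the same route as the paper: a Chernoff bound $\exp[-mD(\beta\Vert Q_0^{-1})]$ on the binomial agreement count of two independent uniform words, followed by random selection of exponentially many words. The only difference is that the paper draws $\lfloor e^{Dm/4}\rfloor$ words and applies a plain union bound so that every pair is good with positive probability (giving $\kappa=D/8$), whereas you draw about $e^{Dm/2}$ words and delete one word from each of the expected $O(1)$ bad pairs. You also state explicitly that identical words count as bad pairs, a point the paper leaves implicit.
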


\begin{proof}
For independent uniform $c,c'$, their agreement count is
$X\sim\mathrm{Bin}(m,Q_0^{-1})$.  Exponential Markov inequality,
optimized over its positive parameter, gives
\begin{equation}
 \Pr\{X\ge\beta m\}
 \le \exp[-mD(\beta\Vert Q_0^{-1})],              \label{eq:chernoff}
\end{equation}
where
$D(a\Vert b)=a\ln(a/b)+(1-a)\ln((1-a)/(1-b))$.
Indeed, minimizing
$e^{-t\beta m}(1-Q_0^{-1}+Q_0^{-1}e^t)^m$ over $t>0$
is exactly \eqref{eq:chernoff}.  Let
$D=D(\beta\Vert Q_0^{-1})>0$. Draw
$L=\lfloor e^{Dm/4}\rfloor$ independent words. For sufficiently large $m$,
$L\ge e^{Dm/8}$, so one may take $\kappa=D/8$ after increasing $m_0$.
A union bound over fewer than $L^2/2$ pairs gives bad probability at most
$\tfrac12e^{-Dm/2}$, which is below one for all sufficiently large $m$.
\end{proof}

\begin{proposition}[Permutation construction]           \label{prop:classical}
Under the hypotheses of Theorem~\ref{thm:main}, there are constants
$c_{\lambda_2}>0$ and $R_{\lambda_2}<\infty$ such that
\begin{equation}
 \ln N_{\mathsf P,4}(d,T;\lambda_1,\lambda_2)
 \ge c_{\lambda_2}\min\{d,T\}                    \label{eq:Plower}
\end{equation}
whenever $\min\{d,T\}\ge R_{\lambda_2}$.
\end{proposition}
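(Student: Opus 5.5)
The plan is to build a block-structured permutation automaton in which a prefix word writes one code symbol into each block. The matching inverse suffix word subtracts the candidate's symbols, and a single terminal indicator then measures the fraction of agreeing coordinates. The off-diagonal margin comes from the code of Lemma~\ref{lem:agreement}. Because $\lambda_2>0$ is fixed, I first choose an integer $Q_0\ge2$ with $Q_0^{-1}<\lambda_2$ and set $\beta=\lambda_2$. Lemma~\ref{lem:agreement} then gives constants $\kappa>0$ and $m_0$, depending only on $\lambda_2$. For each $m\ge m_0$ it gives a code $\cC_m\subseteq\mathbb Z_{Q_0}^m$ of size at least $e^{\kappa m}$ in which distinct codewords agree in fewer than $\beta m$ coordinates.

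\emph{Carrier and maps.} Take $m=\lfloor\min\{d,T\}/Q_0\rfloor$ and identify $mQ_0\le d$ carrier points with pairs $(b,s)\in\mathbb Z_m\times\mathbb Z_{Q_0}$. The remaining points are fixed by every map. Let $A$ be the permutation $(b,s)\mapsto(b+1,s)$, which rotates the block addresses. Let $B$ be the permutation $(0,s)\mapsto(0,s+1)$, which acts as the identity outside block $0$. The four letters are $A,B,A^{-1},B^{-1}$. The initial distribution is $p=\frac1m\sum_b e_{(b,0)}$, and the response vector $f$ is the indicator of $\{(b,0):b\in\mathbb Z_m\}$. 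For a codeword $c$, define the prefix
\[
u(c)=B^{c_0}A\,B^{c_1}A\cdots B^{c_{m-1}}A,
\]
read left to right, and let the suffix $v(c)$ be its formal inverse: the letters are reversed and each is replaced by its inverse. Each word has length at most $mQ_0\le T$.

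\emph{Computing the responses.} I track the mass started at $(b,0)$ under the row-vector convention $pP_{a_1}P_{a_2}\cdots$. At step $t$ the letters $B^{c_t}$ act only on the mass currently sitting in block $0$; that mass then moves to the next block. After the $m$ rounds, the mass from each starting block has received exactly one offset, and it returns to its own address. Hence $pP_{u(c)}$ is uniform over the points $(b,c_{\pi(b)})$ for a fixed bijection $\pi$ of $\mathbb Z_m$ that does not depend on $c$. Applying $v(c')$ undoes the rounds in reverse order and subtracts $c'_{\pi(b)}$ from the symbol in block $b$. Since $\pi$ is the same for every codeword, the final distribution is uniform over the points $(b,c_{\pi(b)}-c'_{\pi(b)})$. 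Therefore
\[
p_{ij}=pP_{u_j}P_{v_i}f=\frac1m\bigl|\{b: c^{(i)}_b=c^{(j)}_b\}\bigr|.
\]
This gives $p_{ii}=1\ge1-\lambda_1$ and $p_{ij}<\beta=\lambda_2$ for $i\ne j$.

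\emph{Conclusion and main obstacle.} With $R_{\lambda_2}=Q_0(m_0+1)$ we have $m\ge m_0$ and $m\ge\min\{d,T\}/(2Q_0)$. This gives $\ln N_{\mathsf P,4}\ge\kappa m\ge(\kappa/2Q_0)\min\{d,T\}$, so the statement holds with $c_{\lambda_2}=\kappa/(2Q_0)$. The main obstacle is the bookkeeping in the second step: the offsets must land in distinct blocks, and the suffix must cancel them block-by-block in the same order. I handle this by making $B$ act on a single block and $A$ rotate all blocks, so that the inverse word reverses the rounds exactly, and by noting that the permutation $\pi$ is common to all codewords.
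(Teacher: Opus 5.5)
Your proposal is correct and uses essentially the same construction as the paper. It has the same ingredients: a block-rotation letter and a single-block offset letter with their inverses, codewords from Lemma~\ref{lem:agreement}, exact inverse suffix words, and the indicator of the zero-offset layer, which reads off the coordinate-agreement fraction. The only differences are cosmetic: your rotation runs in the opposite direction, which you correctly absorb into the codeword-independent bijection $\pi(b)=-b\bmod m$; you take $\beta=\lambda_2$ directly, which is legitimate since $\lambda_2<1$ and the lemma gives strict agreement below $\beta$; and you make the threshold $R_{\lambda_2}=Q_0(m_0+1)$ and the constant $\kappa/(2Q_0)$ explicit.
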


\begin{proof}
Choose an integer $Q_0>2/\lambda_2$ and set
$\beta=(\lambda_2+Q_0^{-1})/2$. Then $Q_0^{-1}<\beta<\lambda_2$, and both
$Q_0$ and $\beta$ depend only on $\lambda_2$.
For an integer $m$, use the carrier
$\cX=\mathbb Z_m\times\mathbb Z_{Q_0}$, and put
\begin{equation}
 S_0=\{(b,0):b\in\mathbb Z_m\}.
\end{equation}
The initial distribution is uniform on \(S_0\), and the final output function is
$f=\one_{S_0}$. Use the four-symbol input alphabet
$C,C^{-1},R,R^{-1}$, acting by the permutations
\begin{align}
 C(b,z)&=(b-1\!\pmod m,z),                         \label{eq:Cperm}\\
 R(b,z)&=
 \begin{cases}
  (0,z+1\!\pmod {Q_0}),&b=0,\\
  (b,z),&b\ne0.
 \end{cases}                                      \label{eq:Rperm}
\end{align}

For $c=(c_0,\ldots,c_{m-1})\in\mathbb Z_{Q_0}^m$, with each
symbol represented in $\{0,\ldots,Q_0-1\}$, define the
prefix word
\begin{equation}
 s(c)=R^{c_0}C\,R^{c_1}C\cdots R^{c_{m-1}}C.       \label{eq:scanword}
\end{equation}
Immediately before the $k$-th displayed $R$-block, a point that
started in block $b$ lies in block $b-k\pmod m$.  It is therefore
changed by $R$ precisely at stage $k=b$.  After all $m$ copies of
$C$, its first coordinate has returned to $b$.  Consequently the
permutation $g_c$ induced by \eqref{eq:scanword} satisfies
\begin{align}
 g_c(b,z)&=(b,z+c_b\!\pmod {Q_0}),\notag\\
 g_c(S_0)&=S_c:=\{(b,c_b):b\in\mathbb Z_m\}.        \label{eq:Sc}
\end{align}
Moreover,
\begin{equation}
 |s(c)|=m+\sum_{b=0}^{m-1}c_b\le mQ_0.            \label{eq:scanlength}
\end{equation}
Let $t(c)=s(c)^{-1}$ be the inverse word: reverse the
letters and replace each by its inverse. In particular,
\begin{equation}
 t(c)=C^{-1}R^{-c_{m-1}}\cdots C^{-1}R^{-c_0},
 \qquad |t(c)|=|s(c)|,                             \label{eq:inverseclassical}
\end{equation}
and $g_{t(c)}=g_c^{-1}$.

Take the code $\cC_m$ from Lemma~\ref{lem:agreement}.  Encode
message $j$ by $s(c_j)$ and test candidate $i$ by $t(c_i)$.
For every ordered pair $i,j$, the response equals
\begin{align}
 p_{ij}
 &=\frac1m|\{b:g_{c_i}^{-1}(b,c_{j,b})\in S_0\}| \notag\\
 &=\frac1m|\{b:c_{i,b}=c_{j,b}\}|.                \label{eq:classicaltable}
\end{align}
Thus $p_{ii}=1$, whereas $p_{ij}<\beta<\lambda_2$ whenever
$i\ne j$. Hence the words form an admissible \(N\)-row table under
the shared initial distribution, final output function, and four permutations.

Finally, set
\begin{equation}
 x=\min\{d,T\},\qquad m=\left\lfloor\frac{x}{Q_0}\right\rfloor.
\end{equation}
The working carrier has $mQ_0\le d$ points, and
\eqref{eq:scanlength} bounds both word lengths by $T$. Embed it into a
$d$-point carrier by extending all four permutations by the identity and
extending the initial distribution and response vector by zero on the added
points.  Lemma~\ref{lem:agreement} gives
$\ln N\ge\kappa m$, which implies \eqref{eq:Plower} above a
constant threshold.
\end{proof}

\section{Quantum construction of a Hankel submatrix with bounded word length}
\label{sec:quantum}

The quantum construction is the geometric counterpart of the cyclic
permutation construction for blockwise encoding. A
Householder reflection moves a rank-$r$ subspace while preserving its
dimension, so a sequence of reflections can explore the Grassmann manifold
while preserving unit diagonal acceptance. The identity
$H_s=D_sHD_s$ means that the shift letter $S$ and the sign letter $Z$
realize reflections in signed coordinate directions from a four-symbol
alphabet closed under inverses. Independent sign choices produce products of
reflections; their mean overlap decays and concentration on the Boolean cube
selects an exponentially large family with small pairwise overlap.  Finally,
the verifier uses the inverse word associated with the encoding word. Because
the transitions are unitary, this inverse suffix returns the encoded state to the root,
giving diagonal acceptance one, while
the Grassmann packing controls the off-diagonal entries of the fixed-margin
response table.

\begin{proposition}[Unitary construction]              \label{prop:quantum}
Under the hypotheses of Theorem~\ref{thm:main}, there are constants
$c_{\lambda_2}>0$ and $R_{\lambda_2}<\infty$ such that
\begin{equation}
 \ln N_{\mathsf U,4}(d,T;\lambda_1,\lambda_2)
 \ge c_{\lambda_2}\min\{d^2,T\}                  \label{eq:Ulower}
\end{equation}
whenever $\min\{d^2,T\}\ge R_{\lambda_2}$.
\end{proposition}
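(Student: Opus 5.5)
We construct four real orthogonal letters \(S,S^{-1},Z,Z^{-1}\) on \(\mathbb R^{d'}\subseteq\mathbb C^{d}\), with \(d'\le d\) chosen below, and a common root projection \(P_0\) of rank \(r=\lfloor d'/2\rfloor\). We take \(\rho=P_0/r\) and \(E=P_0\). Message \(j\) is encoded by a word \(w_j\) whose unitary is \(U_{w_j}\), and test \(i\) uses the inverse word \(w_i^{-1}\). Then
\[
p_{ij}=\tfrac1r\Tr[P_0U_{w_i}^{*}U_{w_j}P_0U_{w_j}^{*}U_{w_i}]=\tfrac1r\Tr[P_iP_j],
\qquad P_k:=U_{w_k}P_0U_{w_k}^*,
\]
so \(p_{ii}=1\). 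Admissibility therefore reduces to the Grassmannian packing condition \(\tfrac1r\Tr[P_iP_j]\le\lambda_2\) for \(i\ne j\).

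For the letters, let \(S\) be the cyclic shift of the standard basis. Let \(Z\) be \(H\) times a fixed signature, where \(H=I-2vv^{\top}\) is a Householder reflection with \(v=(e_0+e_1)/\sqrt2\) or a generic fixed unit vector. As in the identity \(H_s=D_sHD_s\) mentioned in the section introduction, conjugating by powers of \(S\) and sign patterns gives reflections \(H^{(k,s)}\) in signed directions \((e_k\pm e_{k+1})/\sqrt2\). Each of these costs \(O(1)\) letters, with amortized shift cost, if we scan the coordinates in the same block-by-block manner as the word \(s(c)\) in Proposition~\ref{prop:classical}.

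An encoding word is then a scan of length \(O(L)\) that applies, at each of \(L\) stages, either a signed-direction reflection or the identity, chosen by a bit \(\varepsilon\in\{0,1\}^{L}\). We take \(L\asymp\min\{d'^2,T\}\) and \(d'\asymp\min\{d,\sqrt T\}\). The main step is to show that for independent uniform \(\varepsilon,\varepsilon'\),
\[
\mathbb E\,\tfrac1r\Tr[P_\varepsilon P_{\varepsilon'}]\le \lambda_2/2,
\]
once the number of sweeps is a large constant depending on \(\lambda_2\), so \(L\ge C_{\lambda_2}d'^2\) up to the cap \(T\). We then need concentration. The map \(\varepsilon\mapsto \tfrac1r\Tr[P_\varepsilon P_{\varepsilon'}]\) is a function on the Boolean cube. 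We would bound its coordinatewise variation: flipping one bit changes one reflection, a rank-two perturbation, so it moves the overlap by \(O(1/r)\). McDiarmid's inequality then gives deviation probability \(\exp(-c\,r^2\lambda_2^2/L)\). With \(L\asymp d'^2\) this is \(e^{-c'L}\), and the union bound over pairs, as in the proof of Lemma~\ref{lem:agreement}, yields \(e^{cL}\) words with all overlaps at most \(\lambda_2\). When \(T<d^2\) we use \(d'\asymp\sqrt T\) and embed the construction into \(\mathbb C^{d}\) by the identity, exactly as the permutation construction was embedded. Word lengths stay at most \(T\), and \(\ln N\ge c_{\lambda_2}\min\{d^2,T\}\) holds above a threshold.

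The main obstacle is the mean-overlap decay: showing that a random product of \(O(d^2)\) sign-selected local reflections mixes \(P_0\) well enough on the Grassmannian that the expected normalized overlap drops below \(\lambda_2/2\). Mixing to \(\Theta(1)\) precision needs a constant number of sweeps but requires a spectral-gap argument. We would first try computing the second-moment operator \(\mathbb E[U\otimes U]\) for one sweep, acting on the trace-zero symmetric part. We would aim to show that it contracts by a constant factor, using the random signs to kill cross terms, as in the analyses of random reflections and banded orthogonal matrices cited above. A fallback is to compute directly \(\mathbb E\Tr[P_\varepsilon P_{\varepsilon'}]=\Tr[P_0\,\mathcal M^{n}(P_0)]\) for the averaged channel \(\mathcal M\), and to bound the gap of \(\mathcal M\) on the trace-zero Hermitian matrices. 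Pushing that below \(\lambda_2/2\) in a constant number of sweeps may need \(Q_0\)-fold repetition per site.
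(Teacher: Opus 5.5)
Your reduction matches the paper's: $\rho=P_0/r$, $E=P_0$, exact inverse suffixes giving $p_{ii}=1$ and $p_{ij}=\frac1r\Tr[P_iP_j]$, a working dimension $d'\asymp\min\{d,\sqrt T\}$, and embedding by the identity. The gap is in the Grassmannian packing itself, where three of your choices fail.

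\textbf{Rank.} With $r=\lfloor d'/2\rfloor$ your main step cannot hold. For i.i.d.\ words,
$\mathbb E\,\frac1r\Tr[P_\varepsilon P_{\varepsilon'}]=\frac1r\Tr[(\mathbb E P_\varepsilon)^2]\ge r/d'\approx\frac12>\lambda_2/2$.
More generally, expanding $\|\sum_iP_i\|_F^2\ge(Nr)^2/d'$ shows that any $N$ rank-$r$ projections on $\mathbb C^{d'}$ have some off-diagonal normalized overlap at least $(Nr/d'-1)/(N-1)$. So no large admissible family exists once $\lambda_2<1/2$. You need $r/d'$ to be a small multiple of $\lambda_2$; the paper takes $r=\lfloor\lambda_2 d/8\rfloor$, and $r/d$ is exactly the floor term in its mean formula.

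\textbf{Letters.} With $v=(e_0+e_1)/\sqrt2$ one has $He_0=-e_1$ and $He_1=-e_0$, so $H$ is a signed transposition. Then all four letters are signed permutation matrices and generate a group of order at most $2^{d'}d'!$. An admissible family needs pairwise distinct $P_i$, so $\ln N\le d'\ln(2d')=o(d'^2)$, which is only classical-type scaling. The local directions $(e_k\pm e_{k+1})/\sqrt2$ are precisely this degenerate monomial case. You need a delocalized reflection direction. The paper uses $u=d^{-1/2}(1,\ldots,1)^\top$ together with a separate sign letter $Z=I-2e_0e_0^\top$, so that $H_s=D_sHD_s$ costs at most $4d+1$ letters.

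\textbf{Concentration.} This step is off by the whole exponent, whichever $v$ you pick. With $L\asymp d'^2$ bits and bounded differences $O(1/r)$, McDiarmid gives the exponent $c\lambda_2^2r^2/L$. Since $r\asymp d'$, this is $\Theta(\lambda_2^2)$, a constant, not $c'L$, so the union bound over $e^{2cL}$ pairs collapses. This is a scaling obstruction, not a matter of constants: one bit per reflection with influence of order $1/r$ cannot give total squared influence $O(1/d'^2)$.

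The paper avoids this in two ways.
\begin{itemize}
\item It uses only $m=2n=O(d\ln(8/\lambda_2))$ global reflections $H_s$, each driven by $d$ sign bits.
\item It bounds the \emph{summed} squared discrete gradient uniformly. Flipping coordinate $a$ changes $q_M(v)$ by $-4v_a(Mv)_a+4v_a^2M_{aa}$, and $\|Mv\|_2\le1$ gives $\sum_a(\Delta_a f)^2\le 10240/(r^2d)$ per reflection. Hence $V=O(m/(r^2d))=O(1/d^2)$.
\end{itemize}
The log-Sobolev bound of Lemma~\ref{lem:booleanLSI}, which needs only this supremum of summed squares, then yields a tail $e^{-\Omega(d^2)}$. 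Per-bit worst cases $O(1/(r\sqrt d))$ would give only an exponent of order $d$.

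\textbf{Mean decay.} The step you flag as the main obstacle needs no spectral-gap argument in the paper's setup. The Rademacher fourth-moment identity gives $\mathcal L(Q)=\mathbb E_s[H_sQH_s]$ in closed form, and $\mathcal L$ contracts the traceless diagonal part by $1-4/d$ per step. Hence $\mathbb EF_m=r/d+(1-r/d)(1-4/d)^m$. Pairs of independent words are handled by noting that $U_i^\top U_j$ is itself a product of $2n$ i.i.d.\ reflections, so each $G_{ij}$ has the law of $F_{2n}$.
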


\begin{lemma}[Hankel entries generated by inverse words]
\label{lem:quantum-hankel-entry}
Let \(P\) be the rank-\(r\) projection used in the quantum construction,
let \(\rho=P/r\), \(E=P\), and let \(w_i\) be encoding sequences with
unitary products \(U_i\). If \(t_i=w_i^{-1}\) is the inverse word obtained by
reversing \(w_i\) and replacing each symbol by the symbol denoting its inverse,
then the associated acceptance series satisfies
\(
h(w_jt_i)
 =\frac1r\operatorname{Tr}
   \bigl(U_iPU_i^*\,U_jPU_j^*\bigr).
\)
In particular, \(h(w_i t_i)=1\), and the off-diagonal entries are exactly
the normalized Grassmann overlaps.
\end{lemma}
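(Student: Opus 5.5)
This is a direct computation from the conventions in \eqref{eq:chrono} and \eqref{eq:response}, with no estimates involved. In the unitary subclass each letter acts by \(\Phi_a(X)=U_aXU_a^*\). For a concatenated word \(w_jt_i\), the composition rule \(\Phi_{uv}=\Phi_v\circ\Phi_u\) gives \(\Phi_{w_jt_i}(X)=U_{t_i}U_{w_j}XU_{w_j}^*U_{t_i}^*\), where \(U_{w_j}=U_j\).

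The first step is to identify \(U_{t_i}\). Write \(w_i=a_1\cdots a_\ell\). Then \(U_i=U_{a_\ell}\cdots U_{a_1}\) and \(t_i=\bar a_\ell\cdots\bar a_1\), where \(\bar a\) denotes the letter with \(U_{\bar a}=U_a^{-1}\). The alphabet is closed under inverses, with \(S,S^{-1}\) and \(Z,Z^{-1}\) as in the construction. Under the left-to-right convention the leftmost letter of \(t_i\) acts first, so
\[
U_{t_i}=U_{\bar a_1}\cdots U_{\bar a_\ell}=U_{a_1}^{-1}\cdots U_{a_\ell}^{-1}=(U_{a_\ell}\cdots U_{a_1})^{-1}=U_i^*.
\]
The one point needing care is that reversing the word together with the chronological product order gives exactly the inverse, not some other product. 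I would check this on a two-letter word to fix the ordering.

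Next, substitute \(\rho=P/r\) and \(E=P\) into \(h(w)=\Tr[E\Phi_w(\rho)]\). This gives
\[
h(w_jt_i)=\tfrac1r\Tr\bigl(P\,U_i^*U_jPU_j^*U_i\bigr).
\]
Cyclicity of the trace, moving \(U_i\) to the front, turns this into \(\tfrac1r\Tr(U_iPU_i^*\,U_jPU_j^*)\), which is the claimed formula. Note that \(\rho\) is a valid density operator because \(P\) is a rank-\(r\) projection, so \(\Tr P=r\), and \(E=P\) satisfies \(0\le E\le I\).

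For the diagonal case \(i=j\), the formula reduces to \(\tfrac1r\Tr\bigl((U_iPU_i^*)^2\bigr)=\tfrac1r\Tr(U_iPU_i^*)=\tfrac1r\Tr P=1\), because \(U_iPU_i^*\) is again a rank-\(r\) projection. For \(i\ne j\), the quantity \(\tfrac1r\Tr(P_iP_j)\) with \(P_k=U_kPU_k^*\) is by definition the normalized overlap of the two \(r\)-dimensional subspaces \(U_i\operatorname{ran}P\) and \(U_j\operatorname{ran}P\). This is exactly the normalized Grassmann overlap that the later packing argument controls.
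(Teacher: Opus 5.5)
Your proposal is correct and follows the paper's proof. The chronological convention gives $U_{t_i}=U_i^*$, and substituting $\rho=P/r$, $E=P$ yields $\frac1r\Tr(PU_i^*U_jPU_j^*U_i)$; cyclicity of the trace then gives the stated formula, and the diagonal case reduces to $\Tr P/r=1$. You check the inverse-word ordering explicitly, which the paper leaves implicit.

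One cosmetic slip: the construction's alphabet is $\{S,S^{-1},Z,H\}$, with $Z$ and $H$ self-inverse, rather than containing a separate $Z^{-1}$. This does not affect the argument.
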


\begin{proof}
By the composition convention,
\(
h(w_jt_i)
 =\operatorname{Tr}\!\left[
 P\,U_i^*U_j\frac{P}{r}U_j^*U_i
\right].
\)
Cyclic invariance of the trace gives the stated expression. For \(i=j\),
the expression equals \(\operatorname{Tr}(P)/r=1\).
\end{proof}

Equivalently, if \(\rho_j=U_jPU_j^*/r\) and \(F_i=U_iPU_i^*\), then
\(
h(w_jt_i)=\operatorname{Tr}(F_i\rho_j).
\)
Thus the inverse suffix words \(t_i\) index the rows of the response table
\((p_{ij})\), while the encoding prefixes \(w_j\) index its columns.
Equivalently, in the standard Hankel orientation \(H_h(u,v)=h(uv)\), the
prefixes \(w_j\) index rows and the inverse words \(t_i\) index columns;
the response table is the transpose of that selected Hankel submatrix.

Proposition~\ref{prop:quantum} follows from the following Grassmannian
construction with bounded word lengths. It supplies the encoded columns and
rows generated by inverse words, equivalently the standard Hankel rows and
columns after transposition.

The proof has four steps. First, the shift and sign reflections synthesize
every signed Householder reflection using at most $O(d)$ letters.  Second, a
recursion for expected projectors shows that a product of independent reflections drives
the expected normalized overlap toward the ambient Grassmannian overlap scale.
Third, the squared-gradient estimate and concentration on the Boolean cube turn this
mean decay into an exponentially small bad-pair probability.  Finally, a
union bound selects an exponentially large word family, and the inverse
words convert it into an identification code with exact diagonal response.
\begin{proposition}[Grassmannian construction of a Hankel submatrix with bounded word length]
                                                        \label{prop:grassmann}
Fix $0<\lambda<1$.  There are constants
$c_\lambda,C_\lambda,d_\lambda>0$ such that, for every
$d\ge d_\lambda$, one can choose
\begin{itemize}
\item an inverse-closed alphabet $\cA_d\subset O(d)$ of exactly four
      matrices;
\item a rank-$r$ projection $P$, with $r=\Theta_\lambda(d)$;
\item $M\ge e^{c_\lambda d^2}$ words
      $w_1,\ldots,w_M$, each of length at most
      $C_\lambda d^2$,
\end{itemize}
such that, writing $U_i=U_{w_i}$,
\begin{equation}
 \frac1r\Tr(U_iPU_i^*U_jPU_j^*)\le\lambda
 \qquad(i\ne j).                                  \label{eq:Grasspack}
\end{equation}
\end{proposition}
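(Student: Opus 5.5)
The plan is to build the four letters from a cyclic shift and a sign flip, realize every signed Householder reflection by a short word, and take random products of \(K=\Theta(d)\) such reflections. This gives a Boolean cube of dimension \(n=Kd=\Theta(d^2)\) and words of length \(O(d^2)\). The mean overlap is controlled by an averaged-channel recursion, the deviations by Boolean-cube concentration, and the final family is selected by a union bound.

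\emph{Alphabet and synthesis.} Let \(S\) be the cyclic shift permutation matrix, \(Z=\diag(-1,1,\ldots,1)\), and \(H=I-\tfrac2d J\) with \(J\) the all-ones matrix, i.e.\ the Householder reflection along \(\one/\sqrt d\). Put \(\cA_d=\{S,S^{-1},Z,H\}\subset O(d)\). This set is inverse-closed because \(Z\) and \(H\) are involutions. For a sign vector \(s\in\{\pm1\}^d\) let \(D_s=\diag(s)\). Then \(D_s\) is a product of the conjugates \(S^kZS^{-k}\) over the coordinates with \(s_k=-1\). Walking once around the cycle, it is realized by a word of length at most \(2d\). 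The identity \(H_s:=I-\tfrac2d ss^\top=D_sHD_s\) then realizes every signed Householder reflection by a word of length \(O(d)\). For \(\sigma=(s^{(1)},\ldots,s^{(K)})\in\{\pm1\}^{Kd}\) set \(U_\sigma=H_{s^{(K)}}\cdots H_{s^{(1)}}\). Its word has length \(O(Kd)=O(d^2)\) once \(K=\lceil C'd\rceil\). Let \(P\) project onto \(r=\lfloor\lambda d/4\rfloor\) coordinate directions.

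\emph{Mean decay.} Let \(\Psi(X)=\mathbb E_s H_sXH_s\) for uniform \(s\). Expanding \(H_sXH_s\) and using \(\mathbb E[ss^\top]=I\) together with the fourth-moment formula for \(\mathbb E[s_as_bs_cs_e]\), one expects
\[
\Psi(X)=\Bigl(1-\tfrac4d\Bigr)X+\tfrac4{d^2}(\Tr X)\,I+R(X),
\]
where \(R\) is a correction of size \(O(d^{-2})\) built from \(X^\top\) and \(\diag(X)\). From this I would show that \(\Psi\) fixes \(I\) and contracts the traceless part in operator norm by a factor \(1-c/d\). Hence \(\lVert\mathbb E[U_\sigma PU_\sigma^*]-\tfrac rdI\rVert_{op}\le e^{-cK/d}\). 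For independent \(\sigma,\sigma'\) this gives
\[
\mathbb E\,\tfrac1r\Tr\bigl(U_\sigma PU_\sigma^*\,U_{\sigma'}PU_{\sigma'}^*\bigr)\le \tfrac rd+e^{-cK/d}\le\lambda/2
\]
for \(C'\) large. The diagonal terms in \(R\) are the fiddly part of this step. I would handle them by tracking \(\lVert\cdot\rVert_{op}\) of the traceless component rather than trying to diagonalize \(\Psi\) exactly.

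\emph{Concentration (main obstacle).} Regard the overlap as a function \(f\) of both sign strings, a function on \(\{\pm1\}^{2n}\). Bounded differences coordinate by coordinate are too weak. Flipping one sign changes \(f\) by \(O(d^{-3/2})\), so McDiarmid gives only \(e^{-cd}\), whereas the union bound needs \(e^{-cd^2}\). The key estimate must therefore bound the \emph{total} squared gradient uniformly. Within one reflection block, flipping \(s_a\) changes \(H_s\) by a rank-two term of the form \(\tfrac2d(\cdot)\). The resulting change in \(f\) is \(O\bigl(\tfrac1{rd}|(Ys)_a|\bigr)\) for an operator \(Y\) with \(\lVert Y\rVert_{op}=O(1)\), built from conjugated copies of \(P\). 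Summing over \(a\) gives \(O(\lVert Ys\rVert^2/(r^2d^2))=O(d^{-3})\) per block, and summing over \(2K\) blocks gives \(\sum_a(f(x)-f(x^{a}))^2\le\sigma^2=O(d^{-2})\) uniformly in \(x\). The log-Sobolev inequality on the cube with Herbst's argument then yields \(\Pr\{f\ge\mathbb Ef+\lambda/2\}\le\exp(-c\lambda^2d^2)\).

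\emph{Selection.} Draw \(M=\lfloor e^{c\lambda^2d^2/4}\rfloor\) independent strings \(\sigma_1,\ldots,\sigma_M\). By the union bound over fewer than \(M^2\) ordered pairs, with positive probability every pair satisfies \eqref{eq:Grasspack}. Distinctness of the words is automatic, since distinct overlaps below \(\lambda<1\) force distinct words. Therefore some realization works, with \(c_\lambda=c\lambda^2/4\), \(C_\lambda=O(C')\) and \(d_\lambda\) large enough to absorb the floors and \(r\ge1\).
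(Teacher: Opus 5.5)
Your proposal is correct and follows essentially the same route as the paper: the same four letters and the $D_sHD_s$ synthesis of signed Householder reflections, an averaged-channel recursion for the mean overlap, a uniform total squared-gradient bound of order $K/(r^2d)$ fed into the Boolean-cube log-Sobolev/Herbst inequality, and a union bound over pairs. The differences are cosmetic. The paper solves the recursion exactly on the diagonal subspace and reduces $G_{ij}$ in law to a product of $2n$ reflections, whereas you use operator-norm contraction of the traceless part and concentrate directly on the joint cube. One bookkeeping note: tracking $r\approx\lambda d/4$ and $K\approx C'd$ with $C'\gtrsim\ln(8/\lambda)$ gives an exponent of order $\lambda^4/\ln(8/\lambda)$ rather than $c\lambda^2$ (matching the paper's $\gamma_\lambda$), which is harmless since only some $c_\lambda>0$ is needed.
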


\subsection{Proof of the Grassmannian submatrix construction}

\begin{proof}[Proof of Proposition~\ref{prop:grassmann}]

\emph{Exact synthesis of sign--Householder reflections.}

Index coordinates by $\mathbb Z_d$, let
\begin{align}
 u&=d^{-1/2}(1,\ldots,1)^\top,
 &H&=I-2uu^\top,\notag\\
 Se_j&=e_{j+1\pmod d},
 &Z&=I-2e_0e_0^\top.                       \label{eq:fourletters}
\end{align}
and take
\begin{equation}
 \cA_d=\{S,S^{-1},Z,H\}.                           \label{eq:alphabet4}
\end{equation}
This alphabet is inverse closed because $Z^2=H^2=I$.

For $s\in\{\pm1\}^d$, put
$D_s=\diag(s_0,\ldots,s_{d-1})$,
$v_s=D_su=d^{-1/2}s$, and
$\epsilon_j=(1-s_j)/2$.  Since
\begin{equation}
 D_s=\prod_{j=0}^{d-1}S^jZ^{\epsilon_j}S^{-j},     \label{eq:Dsprod}
\end{equation}
ordering the matrix factors by increasing $j$ telescopes
\eqref{eq:Dsprod} to
\begin{equation}
 Z^{\epsilon_0}S Z^{\epsilon_1}S\cdots
 S Z^{\epsilon_{d-1}}S,                           \label{eq:Dsscan}
\end{equation}
where $S^{-(d-1)}=S$ was used at the end.  The displayed matrix
product has at most $2d$ letters.  By reversing its displayed letter
sequence, in accordance with \eqref{eq:chrono}, we obtain a
word $d(s)$ with
\begin{equation}
 U_{d(s)}=D_s,\qquad |d(s)|\le2d.                  \label{eq:Dsword}
\end{equation}
The concatenated word $d(s)H d(s)$ therefore implements exactly
\begin{equation}
 H_s:=I-2v_sv_s^\top=D_sHD_s               \label{eq:Hs}
\end{equation}
and has length at most $4d+1$.

\emph{Recursion for expected projectors.}

Let $P=\diag(I_r,0)$.  For independent uniform sign vectors
$s_1,\ldots,s_m$, set
\begin{align}
 W_0&=I,\qquad
 W_k=H_{s_k}\cdots H_{s_1}\quad(1\le k\le m),\notag\\
 Q_k&=W_kPW_k^\top\quad(0\le k\le m),\notag\\
 F_m&=\frac1r\Tr(PQ_m).                           \label{eq:Fm}
\end{align}
For a deterministic symmetric $Q$, introduce the linear map
\begin{equation}
 \mathcal L(Q)=\mathbb E_s[H_sQH_s].               \label{eq:Ldef}
\end{equation}
The Rademacher fourth-moment identity
\begin{equation}
 \mathbb E[s_is_js_ks_\ell]
 =\delta_{ij}\delta_{k\ell}+\delta_{ik}\delta_{j\ell}
  +\delta_{i\ell}\delta_{jk}
  -2\delta_{ij}\delta_{ik}\delta_{i\ell}           \label{eq:rademacher4}
\end{equation}
follows by checking whether every index occurs an even number of times.
Substitution into each matrix entry gives, for every deterministic
symmetric \(Q\),
\begin{equation}
 \mathbb E_s[v_sv_s^\top Qv_sv_s^\top]
 =\frac{\Tr(Q)I+2Q-2\diag Q}{d^2}.                \label{eq:fourthmoment}
\end{equation}
Expanding $H_sQH_s$, using
$\mathbb E[v_sv_s^\top]=I/d$, and applying
\eqref{eq:fourthmoment} yields
\begin{equation}
 \mathcal L(Q)
 =\left(1-\frac4d\right)Q
  +\frac{4\Tr Q}{d^2}I
  +\frac8{d^2}(Q-\diag Q).                        \label{eq:Lformula}
\end{equation}
In particular, $\mathcal L$ preserves the diagonal subspace and, on
that subspace,
\begin{equation}
 \mathcal L(Q)=\left(1-\frac4d\right)Q
                +\frac{4\Tr Q}{d^2}I.             \label{eq:Ldiag}
\end{equation}
Independence and linearity give
\begin{equation}
 \mathbb E Q_k=\mathcal L(\mathbb E Q_{k-1})
 \qquad(1\le k\le m).                             \label{eq:meanlinear}
\end{equation}
Since $Q_0=P$ is diagonal, $\mathcal L$ preserves the diagonal subspace,
and $\Tr Q_k=r$, induction with \eqref{eq:Ldiag} gives
\begin{equation}
 \mathbb E Q_m
 =\frac rd I+\left(1-\frac4d\right)^m
   \left(P-\frac rd I\right),                     \label{eq:meanQ}
\end{equation}
and hence
\begin{equation}
 \mathbb E F_m
 =\frac rd+\left(1-\frac rd\right)
   \left(1-\frac4d\right)^m.                     \label{eq:meanF}
\end{equation}

\emph{Concentration on the Boolean cube.}

\begin{lemma}[Concentration on the Boolean cube under a bounded discrete gradient]     \label{lem:booleanLSI}
Let $X$ be uniform on $\{\pm1\}^N$, and let $X^{(a)}$ denote
the vector obtained by flipping bit $a$.  If a real function $f$
satisfies, for some $V>0$,
\begin{equation}
 \sup_x\sum_{a=1}^N(f(x)-f(x^{(a)}))^2\le V,       \label{eq:gradV}
\end{equation}
then, for every $t>0$,
\begin{equation}
 \Pr\{f\ge\mathbb Ef+t\}
 \le \exp\left(-\frac{2t^2}{V}\right).           \label{eq:booleanconc}
\end{equation}
\end{lemma}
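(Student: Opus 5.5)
We would prove Lemma~\ref{lem:booleanLSI} by the entropy method (Herbst's argument), starting from the logarithmic Sobolev inequality for the uniform measure on $\{\pm1\}^N$. For $g:\{\pm1\}^N\to\mathbb R$ this inequality reads
\[
 \Ent(g^2)\le \tfrac12\,\mathbb E\sum_{a=1}^N\bigl(g(X)-g(X^{(a)})\bigr)^2 .
\]
We take it as known: it is Gross's two-point inequality, tensorized over the $N$ coordinates by subadditivity of entropy.

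\emph{Step 1 (entropy bound).} Fix $\lambda>0$ and apply the inequality to $g=e^{\lambda f/2}$. Each unordered pair $\{x,x^{(a)}\}$ is counted twice in the expectation, once from each endpoint. Hence we may restrict to the endpoint where $f(x)>f(x^{(a)})$ and double, which gives
\[
 \mathbb E\sum_a (g-g^{(a)})^2 = 2\,\mathbb E\sum_a (g-g^{(a)})^2\mathbf 1\{f>f^{(a)}\}.
\]
On that event, convexity of the exponential gives $(e^{\lambda f/2}-e^{\lambda f^{(a)}/2})^2\le \tfrac{\lambda^2}{4}(f-f^{(a)})^2e^{\lambda f}$. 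Summing over $a$ and using \eqref{eq:gradV} pointwise yields
\[
 \Ent(e^{\lambda f})\le \kappa\,\lambda^2 V\,\mathbb E e^{\lambda f}
\]
with an explicit absolute constant $\kappa$; the crude bookkeeping above gives $\kappa=1/4$.

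\emph{Step 2 (Herbst).} Put $\Lambda(\lambda)=\ln\mathbb E e^{\lambda(f-\mathbb Ef)}$. The entropy bound is equivalent to $\frac{d}{d\lambda}\bigl(\Lambda(\lambda)/\lambda\bigr)\le\kappa V$. Since $\Lambda(\lambda)/\lambda\to0$ as $\lambda\downarrow0$, integrating gives $\Lambda(\lambda)\le\kappa V\lambda^2$. Markov's inequality applied to $e^{\lambda(f-\mathbb Ef)}$, followed by the choice $\lambda=t/(2\kappa V)$, gives
\[
 \Pr\{f\ge\mathbb Ef+t\}\le\exp\bigl(-t^2/(4\kappa V)\bigr).
\]

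\emph{Main obstacle: the constant.} To reach the exponent $2t^2/V$ stated in \eqref{eq:booleanconc} we need $\kappa=1/8$, but the crude estimate only gives $\kappa=1/4$, that is, $\exp(-t^2/V)$. To close the factor of two, we would first replace the $L^2$ log-Sobolev inequality by the modified log-Sobolev inequality for Bernoulli variables,
\[
 \Ent(e^{\lambda f})\le\mathbb E\Bigl[e^{\lambda f}\sum_a\phi\bigl(-\lambda(f-f^{(a)})\bigr)\Bigr],\qquad \phi(x)=e^x-x-1,
\]
again splitting each pair by the sign of $f-f^{(a)}$. We would then use the sharp two-point constant together with the bound $\phi(-y)\le y^2/2$ for $y\ge0$ on the favourable side, aiming for $\kappa=1/8$. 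If the sharp constant does not come out, the fallback is to prove the lemma with $\exp(-ct^2/V)$ for an absolute $c>0$. Every subsequent use of the lemma, namely bounding a bad-pair probability by $e^{-\Omega(d^2)}$ followed by a union bound, needs only some absolute constant in the exponent, so Proposition~\ref{prop:grassmann} would be unaffected.
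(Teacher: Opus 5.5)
Your route is the paper's route: the tensorized Gross inequality $\Ent(g^2)\le\frac12\mathbb E\sum_a(g-g^{(a)})^2$ applied to $g=e^{\lambda f/2}$, followed by Herbst and Chernoff. As written, however, it proves only $\Pr\{f\ge\mathbb Ef+t\}\le e^{-t^2/V}$, so the lemma with its stated constant is not established. The factor two is lost in your Step 1, not in the log-Sobolev inequality. Charging each edge $\{x,x^{(a)}\}$ entirely to its upper endpoint gives $\Ent(e^{\lambda f})\le\frac{\lambda^2}{4}\mathbb E\bigl[e^{\lambda f}\sum_a(f-f^{(a)})_+^2\bigr]$. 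That bound is sharp for small increments, but you then bound a one-sided sum by $V$, which controls the two-sided sum. At a maximizer of $f$ the two sums coincide, so from \eqref{eq:gradV} alone this bookkeeping cannot beat $\kappa=1/4$. Your proposed repair does not help either. The usual product-measure modified log-Sobolev inequality (with $X_a$ replaced by an independent copy), after the same edge pairing, gives only $\Ent(e^{\lambda f})\le\frac{\lambda^2}{2}\mathbb E\bigl[e^{\lambda f}\sum_a(f-f^{(a)})_+^2\bigr]$, that is, $\kappa=1/2$.

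The missing idea is a symmetric two-point estimate,
\[
(e^{x/2}-e^{y/2})^2\le\tfrac18(x-y)^2\bigl(e^x+e^y\bigr).
\]
With $x=\mu+z$ and $y=\mu-z$ this reads $4\sinh^2(z/2)\le z^2\cosh z$. It follows from $\tanh w\le w$ for $w\ge0$ and $\cosh^2(z/2)=(1+\cosh z)/2\le\cosh z$. Since $x\mapsto x^{(a)}$ preserves the uniform measure, $\mathbb E\bigl[(f-f^{(a)})^2e^{\lambda f^{(a)}}\bigr]=\mathbb E\bigl[(f-f^{(a)})^2e^{\lambda f}\bigr]$. Hence
\[
\Ent(e^{\lambda f})\le\tfrac12\cdot\tfrac{\lambda^2}{8}\cdot 2\,\mathbb E\Bigl[e^{\lambda f}\sum_a(f-f^{(a)})^2\Bigr]\le\tfrac{\lambda^2V}{8}\,\mathbb E e^{\lambda f}.
\]
Each endpoint now carries the full two-sided gradient sum, which is exactly what \eqref{eq:gradV} controls, with half your coefficient. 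This gives $\kappa=1/8$, and your Step 2 with $\lambda=4t/V$ then yields $\exp(-2t^2/V)$. This is precisely how the paper argues.

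Your fallback is right that Lemma~\ref{lem:reflection-concentration} and Proposition~\ref{prop:grassmann} need only some absolute constant in the exponent; the constants $5120$ and $\gamma_\lambda$ would change. But that is a weaker lemma than the one stated.
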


The proof is given in \ref{app:boolean-concentration}.

\begin{lemma}[Concentration for products of random Householder reflections]
\label{lem:reflection-concentration}
For the random variable $F_m$ in \eqref{eq:Fm} and every $t>0$,
\begin{equation}
 \Pr\{F_m\ge\mathbb EF_m+t\}
 \le\exp\left(-\frac{t^2r^2d}{5120m}\right).     \label{eq:Ftail}
\end{equation}
\end{lemma}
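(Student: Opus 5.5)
The plan is to apply Lemma~\ref{lem:booleanLSI} to \(F_m\), viewed as a function of the \(N=md\) independent Rademacher bits \((s_{k,a})_{1\le k\le m,\,0\le a<d}\). Comparing \eqref{eq:booleanconc} with \eqref{eq:Ftail}, it suffices to prove the discrete-gradient bound
\[
 \sup_{s}\sum_{k=1}^m\sum_{a=0}^{d-1}\bigl(F_m(s)-F_m(s^{(k,a)})\bigr)^2\le V:=\frac{10240\,m}{r^2d}.
\]
I will obtain this from a per-step estimate of order \(1/(r^2d)\) for flips in a fixed reflection \(H_{s_k}\), and then sum over the \(m\) steps.

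\emph{Single flip.} Fix \(k\) and write \(W_m=AH_{s_k}B\) with \(A=H_{s_m}\cdots H_{s_{k+1}}\) and \(B=W_{k-1}\), both orthogonal and unchanged by the flip. Put \(Q=BPB^\top\), so \(0\le Q\le I\). Flipping bit \(a\) replaces \(v=v_{s_k}\) by \(v'=v+\delta_a\), where \(\delta_a=-2s_{k,a}d^{-1/2}e_a\). Hence
\[
 H'-H=-2\bigl(\delta_av^\top+v\delta_a^\top+\delta_a\delta_a^\top\bigr),
\]
and
\[
 F_m-F_m'=\tfrac1r\Tr\bigl(PA[(H-H')QH+H'Q(H-H')]A^\top\bigr).
\]
Introduce the matrices \(M_1=QHA^\top PA\) and \(M_2=A^\top PAH'Q\); each has operator norm at most \(1\). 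Cyclicity of the trace turns every term into a bilinear form \(x^\top M y\) with \(M\in\{M_1,M_2\}\), one of \(x,y\) equal to \(\delta_a\), and the other equal to \(v\) or \(\delta_a\). Each form is then controlled as follows:
\begin{itemize}
\item the form \(\delta_a^\top Mv\) equals \(\mp2d^{-1/2}(Mv)_a\);
\item the form \(v^\top M\delta_a\) equals \(\mp2d^{-1/2}(M^\top v)_a\);
\item the quadratic terms in \(\delta_a\) are \(O(1/d)\) uniformly in \(a\).
\end{itemize}

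\emph{Summing over \(a\).} By \((x+y+\cdots)^2\le c(x^2+y^2+\cdots)\),
\[
 \sum_a(F_m-F_m')^2\le\frac{C}{r^2}\Bigl(\frac1d\sum_a\bigl[(M_1v)_a^2+(M_1^\top v)_a^2+(M_2v)_a^2+(M_2^\top v)_a^2\bigr]+d\cdot\frac1{d^2}\Bigr).
\]
The bracketed sums are squared Euclidean norms of \(M_iv\) and \(M_i^\top v\), each at most \(\|v\|^2=1\). This gives a per-step bound of \(C'/(r^2d)\) with an absolute constant \(C'\).

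One subtlety is that \(M_2\) contains \(H'\), which depends on \(a\). I will remove this dependence by replacing \(H'\) with \(H\) at an additional cost of \(\|H'-H\|_{\mathrm{op}}=O(d^{-1/2})\) per flip. This extra term is multiplied by \(|\delta_a|=2d^{-1/2}\), so it contributes \(O(1/d)\) per flip and is absorbed into the quadratic-error term. Summing the per-step bound over \(k=1,\dots,m\) gives \(V\le C'm/(r^2d)\). Lemma~\ref{lem:booleanLSI} then yields \eqref{eq:Ftail}, after checking that the bookkeeping of the constant factors \(2,4,8\) fits within \(C'\le10240\).

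The main obstacle is avoiding the naive estimate. Bounding each flip by rank times operator norm, \(|F_m-F_m'|\lesssim 1/(r\sqrt d)\), loses a factor \(d\) and gives only \(V\sim m/r^2\). The key step is therefore to write the change as coordinate \(a\) of a fixed vector \(Mv\), so that the sum over \(a\) becomes \(\|Mv\|^2\le1\) rather than \(d\) times a worst-case bound.
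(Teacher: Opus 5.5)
Your proof is correct. It uses the same key mechanism as the paper: a single flip changes $F_m$ by $d^{-1/2}$ times a coordinate of $Mv$ for a fixed contraction $M$, so summing over $a$ gives $\|Mv\|_2^2\le1$. You reach the per-step bound by a different route, though.

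\textbf{The paper's route.} The paper never perturbs the reflection. It absorbs the other factors into the projections $A,R$ (your $A^\top PA$ and $Q$) and expands $\frac1r\Tr(AH_vRH_v)$ exactly as a constant plus $\frac4r[(v^\top Av)(v^\top Rv)-v^\top ARv]$. Every matrix in this expansion is independent of the flipped bit. The exact flip identity $q_M(v^{(a)})-q_M(v)=-4v_a(Mv)_a+4v_a^2M_{aa}$ and a discrete product rule with $|q_A|,|q_R|\le1$ then give $10240/(r^2d)$ with no correction terms.

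\textbf{Your route.} Your telescoping $HQH-H'QH'=(H-H')QH+H'Q(H-H')$ avoids expanding a quartic. However, it leaves $H'$ inside $M_2$, so you need the extra perturbation step, and the constant no longer comes out mechanically.

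\textbf{The deferred constant check.} This check is not a formality. Suppose you bound $\|H'-H\|_{\mathrm{op}}\le 8d^{-1/2}+8d^{-1}$ by the triangle inequality and split into eight squared terms (four fixed linear forms, two corrections, two quadratic terms). You then get a constant above $10240$, about $1.8\times10^4$ even as $d\to\infty$. The check does close in either of two ways.
\begin{itemize}
\item Use the exact value $\|H'-H\|_{\mathrm{op}}=2\sqrt{1-(1-2/d)^2}\le4d^{-1/2}$. With the same eight-term split this gives $5632/(r^2d)$.
\item More cleanly, write $F_m-F_m'=\frac2r[\delta_a^\top(M+M^\top)v+\delta_a^\top M\delta_a]$ with $M=M_1+M_2$, and use $M_2=M_1^\top+A^\top PA(H'-H)Q$. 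This collapses $M+M^\top$ to the fixed matrix $2(M_1+M_1^\top)$ plus a correction of norm at most $8d^{-1/2}$. A three-term split then gives $4608/(r^2d)$.
\end{itemize}
Either bound is at most $10240/(r^2d)$, so Lemma~\ref{lem:booleanLSI} yields the stated exponent.

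\textbf{Comparison.} The paper's polynomial expansion buys a correction-free, purely mechanical constant. Your first-order perturbation argument is more generic, but it needs this extra care to land on the stated $5120$.
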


The squared-gradient proof is given in
\ref{app:squared-gradient}.

\emph{Random coding.}

Set
\begin{equation}
 \alpha=\frac\lambda8,\qquad
 r=\lfloor\alpha d\rfloor,\qquad
 n=\left\lceil\frac d8\ln\frac8\lambda\right\rceil. \label{eq:rn}
\end{equation}
Put $L_\lambda=\ln(8/\lambda)$.  Increase $d_\lambda$ so that
$d\ge5$, $r\ge\lambda d/16$, and
$n\le dL_\lambda/2$ for every $d\ge d_\lambda$.  These choices make all
constants below depend only on $\lambda$.
Draw a sampled word
\begin{equation}
 U=H_{s_n}\cdots H_{s_1}                           \label{eq:endpoint}
\end{equation}
from independent uniform signs.  It is implemented using the stated composition convention
concatenating the words for $H_{s_1},\ldots,H_{s_n}$, in this order.

For two independent sampled words $U_i,U_j$, let $s^{(i)}_k$ and $s^{(j)}_k$
denote their sign vectors.  Then
\(
 \begin{aligned}
 U_i^\top U_j
 &=H_{s^{(i)}_1}\cdots H_{s^{(i)}_n}\\[-2pt]
 &\quad H_{s^{(j)}_n}\cdots H_{s^{(j)}_1}.
 \end{aligned}
\)
Every reflection is self-inverse, reversing one i.i.d. sign sequence
preserves its law, and the two sequences are independent.  Thus the relative
product is distributed as a product of $2n$ independent sign reflections.
Cyclicity of trace therefore shows that
\begin{equation}
 G_{ij}:=\frac1r\Tr(U_iPU_i^\top U_jPU_j^\top)
 \quad\text{has the same law as }F_{2n}.           \label{eq:Gijlaw}
\end{equation}
By \eqref{eq:meanF}, $1-x\le e^{-x}$, and \eqref{eq:rn},
\begin{equation}
 \mathbb EG_{ij}
 \le\frac rd+e^{-8n/d}
 \le\frac\lambda8+\frac\lambda8
 =\frac\lambda4.                                  \label{eq:Gmean}
\end{equation}
Taking $t=3\lambda/4$ in \eqref{eq:Ftail}, with $m=2n$, and using
$r\ge\lambda d/16$ and $2n\le dL_\lambda$, define
\begin{equation}
 \gamma_\lambda:=
 \frac{\lambda^4}{2^{24}L_\lambda}>0.                 \label{eq:gammaexplicit}
\end{equation}
Indeed, the exponent in \eqref{eq:Ftail} is bounded below by
\begin{equation}
\frac{(3\lambda/4)^2(\lambda d/16)^2d}
     {5120\,dL_\lambda}
=\frac{9\lambda^4}{5\cdot2^{22}L_\lambda}d^2
\ge\gamma_\lambda d^2.
\end{equation}
Thus, after increasing \(d_\lambda\) only to enforce the preceding
integer-rounding conditions,
\begin{equation}
 \Pr\{G_{ij}>\lambda\}
 \le e^{-\gamma_\lambda d^2}.                    \label{eq:Gtail}
\end{equation}

Draw
$M=\lfloor e^{\gamma_\lambda d^2/4}\rfloor$ independent sampled words.
A union bound gives the explicit estimate
\begin{equation}
 \Pr\{\text{some pair violates \eqref{eq:Grasspack}}\}
 \le \frac{M^2}{2}e^{-\gamma_\lambda d^2}
 \le \frac12e^{-\gamma_\lambda d^2/2}<1                 \label{eq:unionexplicit}
\end{equation}
for all sufficiently large $d$.  Hence a deterministic family with all
pairwise overlaps bounded by $\lambda$ exists.  If $d_\lambda$ is further
increased so that $\gamma_\lambda d^2/4\ge2\ln2$, then
\begin{equation}
 M\ge e^{\gamma_\lambda d^2/8}.                    \label{eq:Mgrass}
\end{equation}
Each sampled word is a concatenation of $n$ words of length at most \(4d+1\).
Since $n\le dL_\lambda/2$ and $d\ge1$,
\begin{equation}
 |w_i|\le n(4d+1)\le \frac52L_\lambda d^2
 \le C_\lambda d^2,                                \label{eq:quadraticlength}
\end{equation}
where one may take $C_\lambda=3L_\lambda$.
Equations \eqref{eq:alphabet4}, \eqref{eq:Grasspack},
\eqref{eq:Mgrass}, and \eqref{eq:quadraticlength} complete the proof.
\end{proof}

\begin{proof}[Proof of Proposition~\ref{prop:quantum}]
First use Proposition~\ref{prop:grassmann} in a working dimension \(m\)
with \(\lambda=\lambda_2\). Set the shared initial state and terminal effect
to
\begin{equation}
 \rho=P/r,\qquad E=P.                              \label{eq:rootPeffect}
\end{equation}
Encode message $j$ with $w_j$.  Since the four-letter alphabet is
inverse closed, test $i$ uses the exact inverse word
\begin{equation}
 w_i^{-1}=a_\ell^{-1}\cdots a_1^{-1}
 \quad\text{when }w_i=a_1\cdots a_\ell.           \label{eq:inverseword}
\end{equation}
By \eqref{eq:chrono}, $U_{w_i^{-1}}=U_i^*$, and the test-$i$,
message-$j$ response is
\begin{align}
 p_{ij}
 &=\Tr\left[P U_i^*U_j\frac PrU_j^*U_i\right] \notag\\
 &=\frac1r\Tr(U_jPU_j^*U_iPU_i^*).                \label{eq:quantumtable}
\end{align}
Thus $p_{ii}=1$, and Proposition~\ref{prop:grassmann} gives
$p_{ij}\le\lambda_2$ for all $i\ne j$.  Hence these words form an
admissible $(M\times M)$ response table, equivalently a transposed Hankel submatrix, under the shared initial state $\rho$,
terminal effect \(E\), and four transitions.

Let $C_{\lambda_2}\ge1$ be the word-length constant in
Proposition~\ref{prop:grassmann}, and choose
\begin{equation}
 m=\left\lfloor
 \min\left\{d,\sqrt{T/C_{\lambda_2}}\right\}
 \right\rfloor.                                   \label{eq:activequantum}
\end{equation}
Increase the resource threshold so that
\(\min\{d,\sqrt{T/C_{\lambda_2}}\}\ge
2\max\{d_{\lambda_2},2\}\). Then \(m\ge d_{\lambda_2}\), and the floor
inequality \(\lfloor x\rfloor\ge x/2\) gives
\begin{equation}
 m^2\ge \frac{1}{4C_{\lambda_2}}
        \min\{d^2,T\}
 =:c'_{\lambda_2}\min\{d^2,T\}.                  \label{eq:mresource}
\end{equation}
Embed the construction into $\mathbb R^d$ by adjoining identity
actions on the remaining $d-m$ coordinates and extending $P$ by zero.
This embedding preserves all word lengths and responses.
Proposition~\ref{prop:grassmann} and
\eqref{eq:mresource} prove \eqref{eq:Ulower}.
\end{proof}

\begin{proof}[Proof of Theorem~\ref{thm:main}]
Proposition~\ref{prop:converse} with $q=4$ gives
\begin{align}
 \ln N_{\mathsf Q,4}
 &\le\min\left\{(d^2-1)\ln(1+2/\delta),
                  \ln\frac{4^{T+1}-1}{3}\right\},\notag\\
 \ln N_{\mathsf S,4}
 &\le\min\left\{(d-1)\ln(1+2/\delta),
                  \ln\frac{4^{T+1}-1}{3}\right\},               \label{eq:finaluppers}
\end{align}
where $\delta=1-\lambda_1-\lambda_2$.  These are respectively
$O_{\lambda_1,\lambda_2}(\min\{d^2,T\})$ and
$O_{\lambda_1,\lambda_2}(\min\{d,T\})$.  Propositions
\ref{prop:quantum} and \ref{prop:classical}, together with
\eqref{eq:inclusions}, supply the matching lower bounds.
\end{proof}

\section{Zero False-Acceptance Case and Polynomial Word-Length Scaling}
\label{sec:zero}

\subsection{Proof of the zero-false-acceptance law}

\begin{proof}[Proof of Theorem~\ref{thm:zero}]
Since $1-\lambda_1>0$, the word-counting argument in
Proposition~\ref{prop:converse}, with $\delta=1-\lambda_1$, gives the
second term in the upper bound. For the first term, pull each suffix back
to an effect.  Quantum mechanically,
$F_i=\Phi_{v_i}^*(E)$ obeys $0\le F_i\le I$,
\begin{equation}
 \Tr(F_i\rho_i)\ge1-\lambda_1>0,
 \qquad
 \Tr(F_i\rho_j)=0\quad(j\ne i).                   \label{eq:exacteffects}
\end{equation}
For positive semidefinite operators, the zero trace in the second
relation implies that $F_i$ annihilates $\supp\rho_j$ whenever $j\ne i$.
The positive diagonal trace permits a choice of
$x_i\in\supp\rho_i$ with $y_i:=F_ix_i\ne0$ and
$\langle y_i,x_i\rangle>0$.  Consequently
\begin{equation}
 \langle y_i,x_j\rangle=0\quad(j\ne i),
 \qquad \langle y_i,x_i\rangle>0.                 \label{eq:biorthogonal}
\end{equation}
Thus the $x_i$ are linearly independent: applying
$\langle y_i,\mathord\cdot\rangle$ to a linear relation isolates its
$i$th coefficient.  Hence $N\le d$.

Classically, write the prefix distributions as $x_j$ and the pulled-back
responses as $g_i\in[0,1]^d$.  Nonnegativity and
$x_jg_i=0$ for $j\ne i$ show that $g_i$ vanishes on
$\supp x_j$.  Since $x_ig_i\ge1-\lambda_1>0$, choose
$a_i\in\supp x_i$ with $g_i(a_i)>0$.  The points $a_i$ are distinct:
if $a_i=a_j$ with $i\ne j$, the off-diagonal condition would give both
$g_i(a_i)>0$ and $g_i(a_j)=0$.  Therefore $N\le d$ also classically.

For the lower bound, put
\begin{equation}
 M=\min\{d,2^{T+1}-1\}.
\end{equation}
Take the first $M$ vertices, in breadth-first order, of the depth-$T$
binary word tree.  They form a prefix-closed set
$W\subseteq\{A,B\}^{\le T}$ of cardinality $M$.  Assign its words
to distinct carrier points $x_w$.  The rules
\begin{equation}
 A(x_w)=x_{wA},\qquad B(x_w)=x_{wB}                \label{eq:treepartial}
\end{equation}
whenever the displayed child belongs to $W$ define two partial
injections.  Each extends to a permutation of the $d$-point carrier,
because its unused domain and unused range have the same cardinality.
Use the four letters $A,B,A^{-1},B^{-1}$.

Starting from $x_\varnothing$, word $w$ reaches
$x_w$.  Write $g_v$ for the permutation induced by word $v$.  Test suffix $v$
uses the exact inverse word $v^{-1}$.  Since the full action is bijective,
\begin{equation}
 \begin{aligned}
 g_v^{-1}(x_w)=x_\varnothing
 &\quad\Longleftrightarrow\quad
 x_w=g_v(x_\varnothing)=x_v\\
 &\quad\Longleftrightarrow\quad w=v.
 \end{aligned}                                    \label{eq:treetable}
\end{equation}
The point mass and indicator at $x_\varnothing$ therefore realize all
$M^2$ cells of the exact identity table.  This is a permutation code
and, by using the corresponding permutation unitaries, also a unitary
code; the stochastic and CPTP cases follow by inclusion.  Taking
logarithms in \eqref{eq:zerobounds} proves \eqref{eq:zerolaw}; indeed,
for $d\ge3$ and $T\ge1$,
\begin{align}
 \ln N_{\mathsf X,4}
 &\ge \min\{\ln d,\ln(2^{T+1}-1)\}
  \ge (\ln2)\min\{\ln d,T\},\notag\\
 \ln N_{\mathsf X,4}
 &\le \min\!\left\{\ln d,
          \ln\frac{4^{T+1}-1}{3}\right\}
  \le (2\ln4)\min\{\ln d,T\}.                   \label{eq:zeroexplicit}
\end{align}
Thus one may take $c_0=\ln2$ and $C_0=2\ln4$, independently of
$\lambda_1$.
\end{proof}

\subsection{Double-logarithmic growth for polynomial word length}

\begin{proof}[Proof of Corollary~\ref{cor:poly-horizon}]
Theorem~\ref{thm:main} and \eqref{eq:polyhorizon} give
\begin{align*}
 \ln N_{\mathsf Q,4},\ \ln N_{\mathsf U,4}
   &=d^{\min\{\tau,2\}+o(1)},\\
 \ln N_{\mathsf S,4},\ \ln N_{\mathsf P,4}
   &=d^{\min\{\tau,1\}+o(1)}.
\end{align*}
Taking one more logarithm and dividing by $\ln d$ proves
\eqref{eq:polyQ}--\eqref{eq:polyC}.
\end{proof}

\section{Consequences, Stability, and Extensions}
\label{sec:discussion}

\subsection{The dimension--word-length phase diagram}

For fixed positive false-acceptance tolerance, the maximum word-length bound has three scaling regimes, up to constants depending on the fixed margins.

\begin{table}[t]
\centering
\small
\setlength{\tabcolsep}{4pt}
\caption{Scaling regimes for the maximum word length.}
\label{tab:regime-table}
\begin{tabular}{p{0.34\textwidth}p{0.25\textwidth}p{0.25\textwidth}}
\hline
Maximum word-length regime & \(\ln N_{\mathsf Q,4}\) & \(\ln N_{\mathsf S,4}\)\\
\hline
\(T\lesssim d\) & \(T\) & \(T\)\\
\(d\lesssim T\lesssim d^2\) & \(T\) & \(d\)\\
\(T\gtrsim d^2\) & \(d^2\) & \(d\)\\
\hline
\end{tabular}
\end{table}
For \(T\lesssim d\), both models are limited by word length and satisfy
\(\ln N=\Theta(T)\).  For \(d\lesssim T\lesssim d^2\), the classical
law has reached its \(\Theta(d)\) state-space upper bound, while the quantum
law continues to grow as \(\Theta(T)\).  For \(T\gtrsim d^2\), the two
models attain their respective geometric orders \(\Theta(d^2)\) and
\(\Theta(d)\).

The boundary \(\lambda_2=0\), \(0\le\lambda_1<1\), forms a distinct
false-acceptance regime. Every row then has a nonzero component accepted by its
designated test and annihilated by all other tests, which
yields \(N\le d\); a binary tree implemented by four permutations
matches this dimension bound together with the word-count bound.  Hence
all four models satisfy
\(
 \ln N=\Theta(\min\{\ln d,T\})
\)
throughout this face, as stated in Theorem~\ref{thm:zero}.  Every fixed
\(\lambda_2>0\) with \(\lambda_1+\lambda_2<1\) opens the positive-tolerance
packing regime described by Theorem~\ref{thm:main}.

\subsection{Explicit constructions and finite precision}

The four quantum transition maps are explicit: signed permutations are combined
with the rational rank-one update \(2uu^\top\). The four classical
blockwise permutation transitions are explicit block permutations. The probabilistic method
selects exponential word families from the corresponding bounded sets of words of length at most $T$
and thereby yields codebooks attaining the rates in
Theorem~\ref{thm:main}.

The unitary construction also has a finite-precision stability margin.
Suppose
\(\{\widetilde U_a:a\in\mathcal A_d\}\subset U(d)\) is an
inverse-closed unitary implementation satisfying
\(\max_a\|U_a-\widetilde U_a\|_{\mathrm{op}}\le\varepsilon\) and
\(\widetilde U_{a^{-1}}=\widetilde U_a^\dagger\) for every letter, with
the same convention for the self-inverse letters.  Telescoping a word
\(w\) of length at most \(L\) gives
\begin{equation}
 \|U_w-\widetilde U_w\|_{\mathrm{op}}\le L\varepsilon.
 \label{eq:word-robustness}
\end{equation}
Consequently, the normalized overlap of two conjugated rank-\(r\)
projections changes by at most \(4L\varepsilon\).  Running the packing
construction with internal target \(\lambda_2/2\) and taking
\(\varepsilon\le\lambda_2/(8L)\) preserves the off-diagonal bound, while the
exact inverse suffix preserves diagonal acceptance one.  Since
\(L=O_{\lambda_2}(d^2)\), an operator-norm tolerance of order \(d^{-2}\)
suffices for fixed error parameters; equivalently, the number of precision bits per scalar parameter needed
to specify the required tolerance grows as \(O(\log d)\).  Thus the complete
identification table is stable under inverse-closed unitary perturbations at
polynomially small operator-norm scale.

\subsection{Finite-precision and noise stability}

The same telescoping estimate gives a scaling analysis for
noisy implementations. The finite-precision statement concerns
perturbations of the intended
generators and preserves the inverse relation by assumption. An
implementation may apply a noisy channel after each transition, so that
the implemented word follows an approximate inverse relation.  If the
per-step implementation error is bounded in diamond norm by $\eta$, a
standard telescoping estimate gives an accumulated channel error of order
\(L\eta\) for a word of length \(L\). Preservation of the stated acceptance thresholds requires the accumulated error to
remain below a constant fraction of the available diagonal and off-diagonal
slacks; in particular, it is bounded by a constant multiple of
\(\min\{\lambda_1,\lambda_2\}\). For the present quantum construction,
\(L=O_{\lambda_2}(d^2)\), so a per-step noise scale of order
\(d^{-2}\) is a sufficient polynomial benchmark. A full theorem for
depolarizing, dephasing, and correlated non-Markovian noise would use a noisy
reachable-state version of the Grassmannian Hankel construction and is a natural
direction for future work.
\subsection{Extensions}

The framework suggests three groups of extensions. First, a circuit
realization problem fixes a dimension-independent local input alphabet,
charges circuit depth for synthesizing each transition, and incorporates
general noisy CPTP implementations with approximate inverse relations. This
replaces the maximum allowed word length by local-circuit complexity.

Second, explicit and efficiently indexable word families would
derandomize both achievability arguments.  The same combinatorial direction
includes determining the minimum alphabet cardinality that preserves the
quantum and classical scaling laws.

Third, fixing a supplied transition tuple, or drawing one from a specified
ensemble, defines individual and typical submatrix sizes for fixed or randomly drawn transition tuples. The
unary stochastic model is a natural first test case for this fixed-dynamics
program; resolving its scaling as a function of maximum word length requires sharp control of
transient behavior under a fixed separation margin.

Finally, we prove that four shared transitions suffice. Determining whether
two or three letters attain the same scaling as a function of maximum word length is an open
minimum-alphabet-size problem.  Lower bounds for
this question may have to combine growth restrictions for free semigroups
with generation constraints for permutation or unitary groups.  This
provides a concrete direction for reducing the input alphabet while
retaining the quantum--classical separation.

These extensions replace optimized growth of reachable sets under bounded word length by circuit, ensemble, or
noisy reachable-state geometry and lead to further scaling laws for
acceptance tables generated by shared transitions.

\section{Conclusion}
\label{sec:conclusion}

We developed an acceptance-table formulation for automata with a constant-size
input alphabet. Prefix words generate reachable states from a single initial
state, suffix words generate terminal effects or linear functionals, and the
resulting table is a finite prefix--suffix submatrix of the acceptance Hankel
matrix.

For every fixed $\lambda_2>0$ with $\lambda_1+\lambda_2<1$, we proved that
four shared transition maps suffice to achieve optimal state-space packing
scalings in the maximum word length. The unitary and general CPTP classes
satisfy $\ln N=\Theta_{\lambda_1,\lambda_2}(\min\{d^2,T\})$, whereas the
permutation and stochastic classes satisfy
$\ln N=\Theta_{\lambda_1,\lambda_2}(\min\{d,T\})$.
Explicit transition realizations were constructed: the four quantum transitions
support a Grassmannian packing via words of bounded length; the probabilistic
method yields an exponentially large word family for the encoded columns, while
inverse words supply the corresponding rows. Consequently, the quantum response
table is a transposed acceptance Hankel submatrix. On the classical side, four
explicit permutations generate the table through blockwise permutation encoding,
with the same probabilistic method supplying the codeword family. Exact inverse
words yield unit-diagonal acceptance, while subspace overlap and support
intersection govern the off-diagonal entries. The CPTP and stochastic converses
hold uniformly over arbitrary transition tuples, combining state-space packing
with word counting to establish that the quadratic and linear
maximum-word-length scales are optimal up to constants.

A qualitatively different regime emerges when $\lambda_2=0$ and
$0\leq\lambda_1<1$. Here each message must be accepted by exactly one test and
annihilated by all others, forcing $N\le d$ in both the quantum and classical
settings. A four-letter prefix-tree construction simultaneously matches this
dimension bound and the word-count bound, yielding
$\ln N=\Theta(\min\{\ln d,T\})$
across all four model classes. This reveals a structural dichotomy: metric
packing governs every fixed positive off-diagonal tolerance, whereas exclusive
disjoint accepting supports govern the zero false-acceptance case.

The transition families and their synthesis words are explicit; the exponential
codebook families are supplied by the probabilistic method and remain stable
under inverse-closed finite-precision perturbations at polynomially small
operator-norm scale. While four shared transitions suffice for the full
quantum--classical identification separation established here, whether two or
three maps are enough remains an open minimum-alphabet problem, alongside other
directions such as noisy-channel implementations, derandomized word families,
and table generation for a fixed supplied transition tuple.

\appendix

\clearpage

\section{Concentration on the Boolean Cube}
\label{app:boolean-concentration}

\begin{proof}[Proof of Lemma~\ref{lem:booleanLSI}]
The tensorized Bernoulli Gross log-Sobolev inequality has the following
normalization; see also the finite-chain treatment in
\cite{gross1975logarithmic,diaconis1996logarithmic}.
On one uniform bit, for values $a,b\in\mathbb R$,
\begin{equation}
 \Ent(g^2)
 \le\frac12(a-b)^2,
 \qquad g(+1)=a,\quad g(-1)=b,                    \label{eq:onebitLSI}
\end{equation}
where $\Ent(Y)=\mathbb E[Y\ln Y]-\mathbb EY\ln\mathbb EY$.
Replacing $a,b$ by their absolute values can only decrease the
right-hand side.  The case $a=b=0$ is immediate.  Otherwise, by
homogeneity normalize $(a^2+b^2)/2=1$ and, after interchanging $a$ and
$b$ if necessary, write
$a=\sqrt{1+x}$, $b=\sqrt{1-x}$, $0\le x\le1$.  The two sides
of \eqref{eq:onebitLSI} become
\begin{align}
 L(x)&=\tfrac12[(1+x)\ln(1+x)+(1-x)\ln(1-x)],\notag\\
 R(x)&=1-\sqrt{1-x^2}.
\end{align}
They agree at zero, while
\begin{equation}
 R'(x)-L'(x)=\frac{x}{\sqrt{1-x^2}}-\operatorname{arctanh}x\ge0,
                                                        \label{eq:LSIderivative}
\end{equation}
because the derivative of the last difference is
$(1-\sqrt{1-x^2})/(1-x^2)^{3/2}\ge0$.  This proves
\eqref{eq:onebitLSI}.

For a nonnegative function $Y$ of independent coordinates, entropy
tensorization and the log-sum inequality give
\begin{equation}
 \Ent(Y)\le\sum_{a=1}^N
 \mathbb E\,\Ent_a(Y),                             \label{eq:enttensor}
\end{equation}
where $\Ent_a$ takes entropy only over bit $a$, with all other bits
fixed.  To verify \eqref{eq:enttensor} inductively, write
\(X'=(X_1,\ldots,X_{N-1})\) and split off the last bit.  The exact chain
rule is
\(
 \Ent(Y)=\mathbb E_{X_N}\Ent_{X'}(Y)
 +\Ent_{X_N}(\mathbb E_{X'}Y).
\)
The induction hypothesis bounds the first term by the first \(N-1\)
terms on the right-hand side of \eqref{eq:enttensor}.  The map
\(h\mapsto\Ent_{X_N}(h)\) is convex on nonnegative two-vectors: this is
the two-term log-sum inequality.  Hence
\(
 \Ent_{X_N}(\mathbb E_{X'}Y)
 \le \mathbb E_{X'}\Ent_{X_N}(Y),
\)
which supplies the last term and proves \eqref{eq:enttensor}.
Applying \eqref{eq:onebitLSI} conditionally to $g$ and then
\eqref{eq:enttensor} yields
\begin{equation}
 \Ent(g^2)\le\frac12\mathbb E
 \sum_{a=1}^N(g(X)-g(X^{(a)}))^2.                 \label{eq:cubeLSI}
\end{equation}

Set $g=e^{\theta f/2}$.  The elementary inequality
\begin{equation}
 (e^{x/2}-e^{y/2})^2
 \le\frac18(x-y)^2(e^x+e^y)                      \label{eq:expineq}
\end{equation}
follows by putting \(z=(x-y)/2\), using
\(\sinh^2(z/2)\le(z^2/4)\cosh^2(z/2)\), and
\(\cosh^2(z/2)\le\cosh z\).  Inequality \eqref{eq:expineq} and
symmetry under $X\leftrightarrow X^{(a)}$ turn
\eqref{eq:cubeLSI} and \eqref{eq:gradV} into
\begin{equation}
 \Ent(e^{\theta f})
 \le\frac{\theta^2V}{8}\mathbb E e^{\theta f}.   \label{eq:mgfentropy}
\end{equation}
For
$\psi(\theta)=\ln\mathbb E e^{\theta(f-\mathbb Ef)}$, the identity
$(\psi(\theta)/\theta)'=
 \Ent(e^{\theta f})/(\theta^2\mathbb E e^{\theta f})$
and \eqref{eq:mgfentropy} give, by integration from \(0\) to \(\theta\),
\(
 \frac{\psi(\theta)}{\theta}
 \le\int_0^\theta\frac{V}{8}\,ds=\frac{\theta V}{8},
 \qquad
 \psi(\theta)\le\frac{\theta^2V}{8}.
\)
Chernoff's inequality and the choice
$\theta=4t/V$ prove \eqref{eq:booleanconc}.
\end{proof}

\section{Squared-Gradient Estimate}
\label{app:squared-gradient}

\begin{proof}[Proof of Lemma~\ref{lem:reflection-concentration}]
Fix every sign vector in \eqref{eq:Fm} except $s_k$.  Absorbing all
preceding and following orthogonal factors into two rank-$r$
projections $R$ and $A$, the dependence on
$v=d^{-1/2}s_k$ has the form
\begin{equation}
 f(v)=\frac1r\Tr(AH_vRH_v),
 \qquad H_v=I-2vv^\top.                    \label{eq:localf}
\end{equation}
Expansion gives
\begin{equation}
 f(v)=\text{constant}+\frac4r
 \big[(v^\top Av)(v^\top Rv)
      -v^\top ARv\big].                    \label{eq:fexpand}
\end{equation}
For a matrix $M$, put $M_{\rm s}=(M+M^\top)/2$ and
$q_M(v)=v^\top Mv$.  Then $q_M=q_{M_{\rm s}}$.  In the applications
below, $\lVert M_{\rm s}\rVert_{\mathrm{op}}\le1$, so it suffices to take
$M=M^\top$ with $\lVert M\rVert_{\mathrm{op}}\le1$.  Flipping
coordinate $a$ then gives, with $v^{(a)}$ denoting the resulting vector,
\begin{equation}
 q_M(v^{(a)})-q_M(v)
 =-4v_a(Mv)_a+4v_a^2M_{aa}.                       \label{eq:qflip}
\end{equation}
Since $v_a^2=1/d$, $\lVert Mv\rVert_2\le1$, and
$|M_{aa}|\le1$,
\begin{equation}
 \sum_{a=1}^d(q_M(v)-q_M(v^{(a)}))^2
 \le\frac{64}{d}.                                \label{eq:qgrad}
\end{equation}
All three quadratic forms in \eqref{eq:fexpand} have absolute value at
most one; for $AR$, use its symmetric part
$(AR+RA)/2$, whose operator norm is at most one.  If
\(\Delta_aq=q(v)-q(v^{(a)})\), the finite-difference product rule and
\((x+y)^2\le2x^2+2y^2\) give explicitly, with
\(h=q_Aq_R-q_{AR}\),
\begin{align}
 \sum_a[\Delta_a(q_Aq_R)]^2
 &\le2\sum_a(\Delta_aq_A)^2\notag\\
 &\quad+2\sum_a(\Delta_aq_R)^2
 \le\frac{256}{d},                                      \label{eq:productgrad}\\
 \sum_a(\Delta_a h)^2
 &\le2\frac{256}{d}+2\frac{64}{d}
 =\frac{640}{d}.                                        \label{eq:combinedgrad}
\end{align}
Multiplication by the prefactor $(4/r)^2$ in
\eqref{eq:fexpand} therefore gives
\begin{equation}
 \sum_{a=1}^d(f(v)-f(v^{(a)}))^2
 \le\frac{10240}{r^2d}.                           \label{eq:localgradient}
\end{equation}

There are $m$ independent sign vectors and the bound is uniform after
all other signs are fixed.  Consequently, as a function of all $md$
sign bits, let $F_m^{(k,a)}$ denote $F_m$ after flipping coordinate $a$
of the $k$th sign vector.  Then
\begin{equation}
 \sup\sum_{k=1}^m\sum_{a=1}^d
 (F_m-F_m^{(k,a)})^2
 \le\frac{10240m}{r^2d}.                          \label{eq:globalgradient}
\end{equation}
Lemma~\ref{lem:booleanLSI} now implies \eqref{eq:Ftail}.
\end{proof}

\section*{Acknowledgment}
The author would like to thank Fei Yan and Pan Zhang for their mentorship and guidance. Deep gratitude is also extended to the author's wife, Ling Wang, for her unwavering support and encouragement.

\section*{Declaration of competing interest}
The author declares that there are no known competing financial interests or
personal relationships that could have appeared to influence the work reported
in this paper.


\section*{Declaration of generative AI and AI-assisted technologies in the manuscript preparation process}
During the preparation of this manuscript, the author used OpenAI Codex to
assist with language editing, structural polishing, and mathematical-
consistency checks. The author reviewed and edited all generated output and
takes full responsibility for the content of the manuscript.

\bibliographystyle{elsarticle-num}
\bibliography{refs}

\end{document}